\documentclass[acmsmall,screen]{acmart}

\usepackage{braket}
\usepackage{amsthm}
\newtheorem{lemma}{Lemma}
\newtheorem{problem}{Problem}
\newtheorem{example}{Example}
\usepackage{multirow}
\usepackage{graphicx}
\usepackage{subcaption}

\usepackage{algorithm}
\usepackage{algorithmic}

\usepackage[most]{tcolorbox}
\definecolor{bleudefrance}{rgb}{0.19, 0.55, 0.91}

\AtBeginDocument{%
  }

\setcopyright{acmlicensed}
\copyrightyear{2018}
\acmYear{2018}
\acmDOI{XXXXXXX.XXXXXXX}

\acmJournal{JACM}
\acmVolume{37}
\acmNumber{4}
\acmArticle{111}
\acmMonth{8}

\begin{document}

\title{StabQ: Quantum Program Analysis via Weighted Stabilizer Representations}

\author{Shangzhou Xia}
\email{xia.shangzhou.218@s.kyushu-u.ac.jp}
\orcid{0009-0006-2775-9633}
\affiliation{%
  \institution{Kyushu University}
  \city{Fukuoka}
  \country{Japan}
}

\author{Junjie Luo}
\email{junjie.luo@riken.jp}
\orcid{0009-0008-7821-6879}
\affiliation{%
  \institution{RIKEN Center for Quantum Computing}
  \city{Wako}
  \state{Saitama}
  \country{Japan}
}

\author{Jianjun Zhao}
\email{zhao@ait.kyushu-u.ac.jp}
\orcid{0000-0001-8083-4352}
\authornote{Corresponding author: zhao@ait.kyushu-u.ac.jp}
\affiliation{%
  \institution{Kyushu University}
  \city{Fukuoka}
  \country{Japan}
}

\renewcommand{\shortauthors}{Shangzhou Xia, Junjie Luo, Jianjun Zhao}

\begin{abstract}
Quantum program analysis remains challenging due to the exponentially large state space of quantum programs and the difficulty of precisely characterizing their execution behavior. In particular, non-Clifford operations introduce additional complexity that limits the applicability of stabilizer-based techniques. Although stabilizer representations provide compact descriptions for Clifford circuits, their limited expressiveness prevents them from directly supporting general quantum program analysis.
In this work, we propose \textbf{StabQ}, a symbolic execution framework for quantum program analysis based on stabilizer representations. StabQ extends stabilizer-based symbolic execution beyond Clifford-only programs by introducing a symbolic state representation that captures and propagates quantum state evolution while preserving execution semantics. Based on this representation, StabQ constructs a \textit{Tableau Chain} that represents the evolution of intermediate symbolic states throughout program execution and enables reusable analysis of quantum program executions. Furthermore, StabQ incorporates tableau consolidation and global-phase recovery mechanisms to mitigate symbolic state growth during execution.
Building upon the Tableau Chain, StabQ supports multiple quantum program analysis tasks, including quantum state reconstruction, entanglement analysis, and Clifford-property detection. We evaluate StabQ on three benchmark suites---\textit{Algorithms}, \textit{MQT Bench}, and \textit{QASMBench}. The results demonstrate that StabQ constructs semantically consistent symbolic models, accurately preserves quantum state evolution, and effectively supports downstream analysis tasks across diverse quantum programs.
\end{abstract}


\begin{CCSXML}
<ccs2012>
   <concept>
       <concept_id>10011007.10011074.10011099.10011692</concept_id>
       <concept_desc>Software and its engineering~Formal software verification</concept_desc>
       <concept_significance>500</concept_significance>
       </concept>
   <concept>
       <concept_id>10010520.10010521.10010542.10010550</concept_id>
       <concept_desc>Computer systems organization~Quantum computing</concept_desc>
       <concept_significance>500</concept_significance>
       </concept>
 </ccs2012>
\end{CCSXML}

\ccsdesc[500]{Software and its engineering~Formal software verification}
\ccsdesc[500]{Computer systems organization~Quantum computing}

\keywords{Quantum Programs Analysis, Stabilizer Formalism, Entanglement Analysis}


\maketitle

\section{Introduction}
\label{intro}
In recent years, quantum computing has shown the potential for significant computational advantages across a range of research domains. This progress has motivated the development of general-purpose quantum computing systems. Meanwhile, the emergence of diverse quantum software stacks~\cite{qiskit,Svore_2018} has made the implementation and execution of quantum programs increasingly accessible. However, the ability to execute a quantum program does not necessarily imply the ability to understand its behavior. Characterizing how quantum states evolve during execution and identifying the properties exhibited by quantum programs remain challenging tasks. Such characterization still largely relies on manual reasoning by experts with deep knowledge of quantum mechanics.
This difficulty stems from the intrinsic complexity of quantum programs. Due to fundamental quantum phenomena such as superposition, entanglement, and probabilistic measurement outcomes, the execution behavior of quantum programs is often difficult to interpret and analyze. Moreover, these effects accumulate as computation progresses, making it increasingly challenging to understand program semantics and verify program behavior. The lack of systematic analysis techniques limits the reliability, maintainability, and trustworthiness of quantum software, creating a major obstacle to the adoption of quantum computing as a general-purpose computing paradigm.
Consequently, a dedicated framework capable of systematically modeling and analyzing quantum program execution is needed. Such a framework should provide an expressive representation of quantum states, preserve execution semantics during program evolution, and support diverse analysis tasks over quantum programs. In particular, an effective representation should balance the ability to capture complex quantum behaviors with the computational efficiency required for practical program analysis.

To analyze quantum programs, prior research has explored several approaches from different perspectives. One major direction relies on static analysis techniques, including abstract interpretation~\cite{perdrix2008entanglement,yu2021quantum}, type systems~\cite{yuan2022twist}, and bug detection frameworks~\cite{zhao2023qchecker,paltenghi2024lintq}. These approaches can efficiently reason about quantum programs and often provide soundness guarantees. However, they typically rely on abstractions that sacrifice state-level precision, which limits their ability to accurately characterize quantum execution behavior.
Another direction constructs specialized representations tailored to particular analysis objectives, such as decision diagrams~\cite{burgholzer2021advanced}, the ZX-calculus~\cite{coecke2011interacting}, and tensor networks~\cite{markov2008simulating}. These representations can achieve high precision or efficiency for specific tasks. However, they often require specialized formulations, support only a limited range of analysis objectives, and lack the flexibility needed for general-purpose quantum program analysis.
These limitations reveal a fundamental challenge in quantum software engineering: existing approaches either achieve scalability through abstraction or achieve precision through specialized representations, but few provide a unified execution model that supports accurate and efficient analysis across diverse quantum program properties.

Symbolic execution offers a promising approach to addressing this challenge. Similar to classical symbolic execution, a quantum symbolic execution framework requires an efficient symbolic representation of program states together with a propagation mechanism that incrementally updates the representation during program execution. Such a representation should faithfully preserve the semantics of quantum state evolution while remaining sufficiently compact to support scalable analysis. Furthermore, it should serve as a unified intermediate representation on which diverse program analysis tasks can be performed. Among existing quantum state representations, stabilizer representations provide a particularly attractive foundation. Instead of explicitly storing exponentially large state vectors, stabilizer representations describe quantum states using polynomial-size stabilizer tableaux, enabling efficient manipulation of a broad class of quantum states. This compact representation has demonstrated excellent scalability in stabilizer simulators for Clifford circuits~\cite{aaronson2004improved,gidney2021stim}. Moreover, stabilizer representations have been successfully employed in several analysis tasks, including Clifford-circuit equivalence checking~\cite{thanos2023fast} and entanglement analysis~\cite{honda2015analysis}. These results suggest that stabilizer representations naturally combine efficient symbolic state propagation with rich structural information, making them well suited as the underlying representation for symbolic execution. Despite these advantages, existing stabilizer representations remain fundamentally limited when applied to general quantum programs. They naturally represent only stabilizer states generated by Clifford operations, whereas practical quantum programs inevitably contain non-Clifford operations required for universal quantum computation. Once non-Clifford operations are encountered, the symbolic state can no longer be maintained using a single stabilizer representation, preventing existing approaches from serving as a general symbolic execution model. Therefore, how to extend stabilizer representations to support both Clifford and non-Clifford operations, and further build a symbolic execution framework for general quantum programs based on such extensions, remains an important research problem that has not yet been fully addressed. \textcolor{black}{Several studies~\cite{Bravyi2016,Bravyi2019} have attempted to extend stabilizer-based methods to non-stabilizer states through stabilizer decomposition techniques. These approaches typically introduce auxiliary magic states and exploit the decomposition of non-stabilizer states into linear combinations of stabilizer states, enabling the simulation of circuits composed of Clifford and non-Clifford operations, such as Clifford+T circuits. By transforming non-stabilizer components into stabilizer-state representations, they effectively handle the evolution of non-stabilizer states under Clifford operations. However, existing stabilizer decomposition approaches primarily focus on improving simulation efficiency and reducing computational overhead. To achieve scalability, many approaches rely on approximation strategies, such as truncating low-weight stabilizer components, which may introduce errors into intermediate state representations. Consequently, these methods are less suitable for precise quantum program analysis tasks that require faithful preservation of program semantics and accurate characterization of intermediate quantum states.}

In this paper, we propose \textbf{StabQ}, a symbolic execution framework for quantum program analysis based on the stabilizer representation. StabQ introduces a symbolic quantum state representation that expresses program states as weighted combinations of stabilizer tableaux. This representation enables incremental state updates during program execution while preserving the semantic information required for subsequent analysis tasks.
To extend stabilizer-based symbolic execution beyond the Clifford-only setting, StabQ introduces a Pauli-decomposition mechanism that transforms non-Clifford operations into weighted combinations of Pauli operators. Since Pauli operators admit efficient stabilizer-based transformations, this mechanism allows non-Clifford operations to be integrated into the same symbolic execution process without introducing an additional state representation. Unlike stabilizer decomposition approaches that primarily target simulation efficiency through decomposing non-stabilizer states, StabQ uses this decomposition as an execution-time transformation mechanism. Rather than optimizing the compactness of the final state representation, StabQ explicitly tracks the effects of non-Clifford operations during execution and maintains intermediate symbolic states for downstream program analysis.
Based on this symbolic representation, StabQ constructs a \textit{Tableau Node} for each program execution step. Each node records the symbolic quantum state together with the corresponding tableau components and weights. These nodes are organized into a \textit{Tableau Chain}, which captures the complete execution history of a quantum program and enables analysis over intermediate program states. To improve scalability, StabQ further incorporates tableau consolidation and global-phase recovery mechanisms. These techniques reduce redundant symbolic representations while maintaining the semantic consistency of state evolution.

Building upon the Tableau Chain, StabQ provides a unified foundation for multiple quantum program analysis tasks, including quantum state reconstruction, entanglement analysis, support-set computation, and Clifford-property detection. By combining symbolic execution with stabilizer-based representations, StabQ enables systematic analysis of quantum programs containing both Clifford and non-Clifford operations.
We evaluate StabQ through three experiments. The first evaluates the semantic correctness of the constructed Tableau Chain models by comparing reconstructed quantum states with those obtained from an exact statevector simulator. The second investigates the efficiency of downstream analysis tasks, including quantum state reconstruction and entanglement analysis. The third studies the scalability of StabQ under different circuit characteristics, including qubit number, circuit size, and Clifford rate. The results demonstrate that StabQ constructs semantically consistent symbolic execution models, supports accurate quantum program analysis, and achieves practical scalability across diverse quantum programs.

Our contributions are summarized as follows:
\begin{itemize}
    \item We introduce \textbf{StabQ}, a symbolic execution framework for quantum program analysis based on stabilizer representations. StabQ extends stabilizer-based state evolution beyond Clifford-only programs by representing quantum program states as weighted combinations of stabilizer tableaux and by supporting general quantum programs that contain non-Clifford operations.

    \item We design the \textit{Tableau Chain} as an intermediate symbolic execution representation that captures the evolution of quantum program states throughout execution. Together with tableau consolidation and global-phase recovery mechanisms, the proposed representation efficiently maintains symbolic states while preserving their semantic consistency.

    \item We integrate multiple quantum program analysis tasks, including quantum state reconstruction, entanglement analysis, support-set computation, and Clifford-property detection, on top of the Tableau Chain. Extensive experiments demonstrate that StabQ constructs semantically consistent symbolic execution models, enables effective downstream analysis, and achieves practical scalability across diverse quantum programs.
\end{itemize}

The remainder of the paper is organized as follows. Section~\ref{sec:background} offers essential background information on quantum programs and the stabilizer formalism. Section~\ref{method} elaborates on the methodology of our approach. Section~\ref{experiment} details the experimental results and the analysis. Section~\ref{discussion} provides an analysis of the current limitations of StabQ and outlines potential directions for future improvement. 
Section~\ref{validity} discusses potential threats to the validity of our study.
Section~\ref{related} reviews work related to our research. Section~\ref{conclusion} summarizes the paper and provides concluding remarks.

\section{Preliminaries}
\label{sec:background}
This section presents the essential background on quantum programs and stabilizer formalism ~\cite{nielsen2010quantum,gottesman1997}, which provides the necessary foundations for the subsequent developments in this paper.

\subsection{Quantum States, Gates and Circuits}
Quantum computation is described as the manipulation of quantum states by quantum gates organized into quantum circuits.
A quantum state is represented as a unit vector in a Hilbert space. For an $n$-qubit system, the state space is given by
\[
|\psi\rangle \in \mathcal{H}_{2^n}, \quad |\psi\rangle = \sum_{i=0}^{2^n-1} \alpha_i |i\rangle,
\]
where $\alpha_i \in \mathbb{C}$ and $\sum_i |\alpha_i|^2 = 1$.
Quantum gates are unitary operators acting on one or more qubits, and they serve as the basic building blocks of quantum computation. Each gate $U$ transforms a quantum state as
\[
|\psi'\rangle = U |\psi\rangle.
\]
A quantum circuit is an ordered sequence of quantum gates applied to an initial state, typically $|0\rangle^{\otimes n}$. The execution of a circuit is expressed as
\[
|\psi_{\text{out}}\rangle = U_m \cdots U_2 U_1 |\psi_{\text{in}}\rangle,
\]
where each $U_i$ corresponds to a quantum gate. This gate-based model captures the step-wise evolution of quantum states and forms the basis for quantum computation, simulation, and analysis.
The Pauli operators are a set of single-qubit unitary operators that are central to the analysis of quantum circuits. They are defined as
\[
I =
\begin{pmatrix}
1 & 0 \\
0 & 1
\end{pmatrix}, \quad
X =
\begin{pmatrix}
0 & 1 \\
1 & 0
\end{pmatrix}, \quad
Y =
\begin{pmatrix}
0 & -i \\
i & 0
\end{pmatrix}, \quad
Z =
\begin{pmatrix}
1 & 0 \\
0 & -1
\end{pmatrix}.
\]
On a single qubit, $X$ performs a bit flip, $Z$ performs a phase flip, and $Y$ combines the two up to a global phase. For an $n$-qubit system, an $n$-qubit Pauli operator is a tensor product of $n$ single-qubit Pauli operators, together with a global phase in $\{\pm 1, \pm i\}$. These operators form the Pauli group $\mathcal{P}_n$ under multiplication. The Pauli group underlies the stabilizer formalism, in which a quantum state is described by the Pauli operators that stabilize it.
Building on the Pauli operators, several elementary gates play a fundamental role in constructing quantum circuits. The Hadamard gate $H$, phase gate $S$, and controlled-NOT gate (CNOT) are defined as follows:
\[
H = \frac{1}{\sqrt{2}}
\begin{pmatrix}
1 & 1 \\
1 & -1
\end{pmatrix}, \quad
S =
\begin{pmatrix}
1 & 0 \\
0 & i
\end{pmatrix}, \quad
\text{CNOT} |ctl, tgt\rangle = |ctl, tgt \oplus ctl\rangle.
\]
The Hadamard gate maps computational basis states to superposition states. The phase gate $S$ applies a relative phase of $i$ to the $\lvert 1 \rangle$ state. The controlled-NOT gate is a two-qubit operation that flips the target qubit when the control qubit is in state $|1\rangle$. Together with the Pauli gates, $H$, $S$, and CNOT generate the Clifford group, and they play a central role in quantum computation.
From a computational perspective, quantum gates are categorized into Clifford and non-Clifford operations. The Clifford group consists of the unitary operators that map the Pauli group to itself under conjugation,
\begin{equation}
U P U^\dagger \in \mathcal{P}_n,
\qquad
\forall P \in \mathcal{P}_n.
\label{eq:clifford}
\end{equation}
This property keeps Clifford circuits within a structured representation that can be tracked efficiently. Non-Clifford gates fall outside this structure and extend the expressive power of quantum circuits beyond it. A commonly used example is the T gate,
\[
T =
\begin{pmatrix}
1 & 0 \\
0 & e^{i\pi/4}
\end{pmatrix},
\]
which, together with the Clifford gates, forms a universal gate set for quantum computation.
By the Gottesman--Knill theorem~\cite{GK1,aaronson2004improved,gottesman1997}, quantum circuits composed solely of Clifford gates can be simulated efficiently on classical computers, whereas the inclusion of non-Clifford gates breaks this efficient simulability and enables universal quantum computation. This distinction is fundamental to the origin of quantum computational advantage in the gate-based model. The presence and structure of non-Clifford gates indicate the computational complexity of a quantum circuit, and they are widely used in characterizing quantum resources.

\subsection{Stabilizer Formalism}
\label{stab}
The stabilizer formalism provides a compact algebraic framework for representing and manipulating an important class of quantum states known as stabilizer states. Let $\mathcal{P}_n$ denote the $n$-qubit Pauli group generated by tensor products of the single-qubit Pauli operators $I$, $X$, $Y$, and $Z$, together with the phase factors $\{\pm 1, \pm i\}$.

A stabilizer state is specified by an abelian subgroup
\begin{equation}
S \subseteq \mathcal{P}_n
\end{equation}
that contains $n$ independent generators and does not contain $-I$. The corresponding quantum state $|\psi\rangle$ is the unique simultaneous $+1$ eigenstate of all stabilizer generators,
\begin{equation}
g|\psi\rangle = |\psi\rangle,
\qquad
\forall g \in S.
\end{equation}

As an example, the Bell state
\[
|\Phi^+\rangle=\frac{|00\rangle+|11\rangle}{\sqrt{2}}
\]
is a stabilizer state. Its stabilizer group is generated by the Pauli operators
\[
S=\langle +XX,\ +ZZ\rangle,
\]
where the leading ``$+$'' denotes that the corresponding generator has eigenvalue $+1$. Throughout this paper, in a Pauli string of the form $\pm P_0P_1\cdots P_{n-1}$, the first Pauli operator $P_0$ corresponds to qubit $q_0$, followed by $P_1,\ldots,P_{n-1}$. Indeed,
\[
(+XX)|\Phi^+\rangle=|\Phi^+\rangle,\qquad
(+ZZ)|\Phi^+\rangle=|\Phi^+\rangle.
\]
The Bell state is therefore characterized uniquely as the simultaneous $+1$ eigenstate of the generators $+XX$ and $+ZZ$.

The stabilizer formalism is powerful because Clifford operations preserve the Pauli group under conjugation~\eqref{eq:clifford}. Clifford evolution can therefore be represented entirely through transformations of the stabilizer generators, which avoids explicit manipulation of exponentially large state vectors.

In practice, stabilizer states are commonly represented using stabilizer tableaux. A stabilizer tableau gives a binary representation of the stabilizer generators through the symplectic formalism. For an $n$-qubit stabilizer state, each generator can be written as
\begin{equation}
g_i = \lambda_i X^{a_i} Z^{b_i},
\end{equation}
where $\lambda_i \in \{\pm 1, \pm i\}$ denotes the phase factor, and $a_i, b_i \in \mathbb{F}_2^n$ specify the positions of the $X$ and $Z$ components. Under the binary symplectic representation, each generator is encoded by a row vector
\begin{equation}
(a_i \mid b_i \mid \lambda_i),
\end{equation}
where the first $n$ bits form the $X$ mask, the next $n$ bits form the $Z$ mask, and the phase entry stores the associated Pauli phase.

Collecting all generators yields the tableau representation
\begin{equation}
t=(X \mid Z \mid r),
\end{equation}
where $X \in \mathbb{F}_2^{n\times n}$ records the $X$ components of all generators, $Z \in \mathbb{F}_2^{n\times n}$ records the $Z$ components, and $r$ stores the corresponding phase information. As an example, the Bell state is represented as
\[
\begin{array}{c|cc|cc|c}
 & X_1 & X_2 & Z_1 & Z_2 & r\\
\hline
+XX & 1 & 1 & 0 & 0 & 0\\
+ZZ & 0 & 0 & 1 & 1 & 0
\end{array}
\]
where the first two columns encode the $X$ components, the next two columns encode the $Z$ components, and the final column records the phase of each stabilizer generator.

In practical simulators, the tableau additionally contains a set of destabilizer generators, giving a complete representation
\begin{equation}
t=(S,D),
\end{equation}
where $S$ denotes the stabilizer generators and $D$ denotes the corresponding destabilizer generators. The stabilizer component alone determines the represented state, while the destabilizers serve as auxiliary structures that support efficient tableau updates under Clifford evolution.

Stabilizer methods give highly efficient representations for Clifford circuits, but universal quantum computation also requires non-Clifford operations. Unlike Clifford operations, non-Clifford operations do not preserve the Pauli group under conjugation, so Eq.~\ref{eq:clifford} does not hold in general. The stabilizer generators are then no longer closed under evolution, and the conventional stabilizer formalism and tableau propagation techniques become inapplicable.

For example, the $T$ gate acts by conjugation as
\[
T X T^\dagger = \frac{1}{\sqrt{2}}(X+Y),
\]
which is no longer a Pauli operator. A stabilizer state therefore evolves into a linear combination of stabilizer states, which the conventional stabilizer formalism can no longer represent or propagate directly.

\section{Methodology}
\label{method}
We present the design of StabQ, a symbolic execution framework for quantum program analysis based on stabilizer representations. The methodology consists of two main components: (1) symbolic state execution and propagation through the construction of the Tableau Chain, which maintains the evolution of weighted stabilizer representations during program execution, and (2) quantum program analysis mechanisms built upon the resulting symbolic execution representation.

\subsection{Workflow}
\label{workflow}

\begin{figure}[htbp]
    \centering
    \includegraphics[width=1\linewidth]{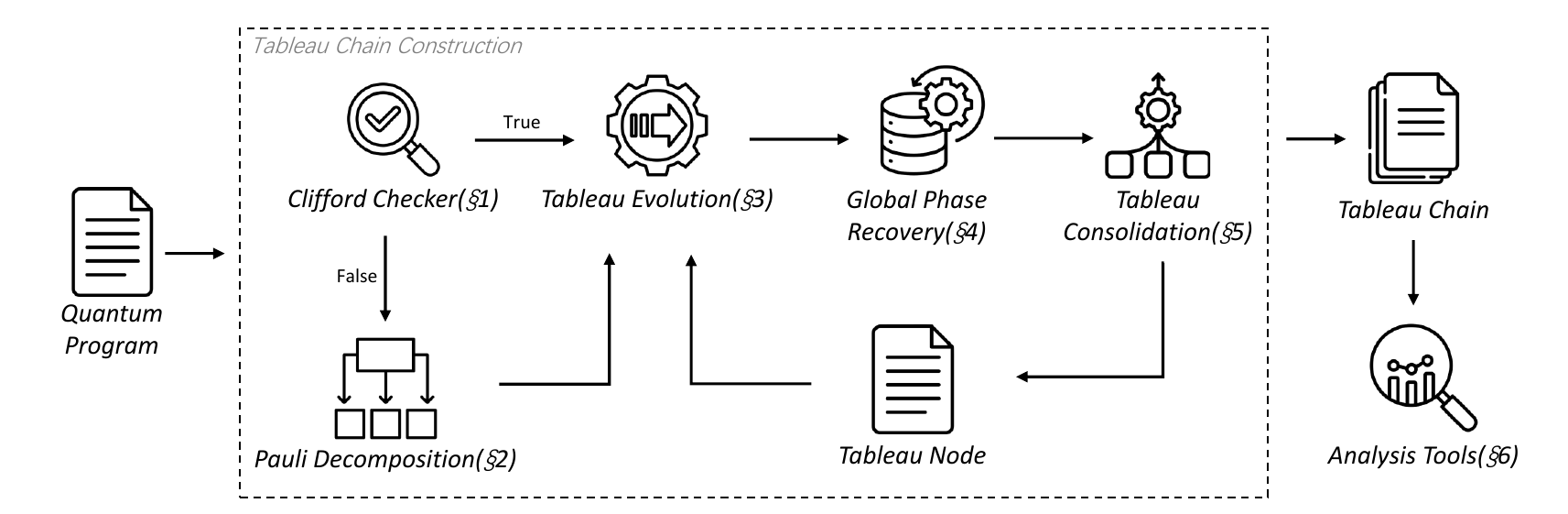}
    \caption{Workflow of StabQ.}
    \label{fig:workflow}
\end{figure}

We now describe the overall workflow of StabQ as illustrated in \autoref{fig:workflow}. Given an input quantum program, StabQ performs symbolic execution by incrementally constructing a Tableau Chain based on the stabilizer formalism. The Tableau Chain serves as an intermediate representation that records the evolution of symbolic quantum states, where each symbolic state is represented as a weighted combination of stabilizer tableaux. In addition, the representation maintains global-phase information introduced during symbolic propagation. This execution model provides a unified foundation for downstream quantum program analysis, enabling property inference and behavioral reasoning over intermediate program states.

The construction process operates at the granularity of individual quantum gates. For each gate, StabQ first applies a Clifford checker (§1) to determine whether the operation belongs to the Clifford group. Clifford operations preserve the stabilizer representation and can therefore be efficiently propagated using standard tableau update rules (§3). For non-Clifford operations, StabQ applies a Pauli-decomposition mechanism (§2), which decomposes the operation into a weighted sum of Pauli operators. These weighted Pauli components are then incorporated into the symbolic state propagation process (§3), enabling the execution model to support general quantum programs beyond the Clifford-only setting.
After each gate-level propagation step, StabQ performs global-phase recovery to maintain a consistent phase representation among symbolic states. The recovered phase information is integrated into the corresponding tableau representation (§4). Subsequently, a tableau consolidation procedure (§5) merges equivalent stabilizer components to reduce redundant symbolic states and control the growth of the Tableau Chain representation.
Each updated symbolic state is encapsulated into a Tableau Node and appended to the Tableau Chain, forming a step-wise symbolic execution representation of the input quantum program. After all gates have been processed, the complete Tableau Chain captures the entire symbolic execution history of the program.

Finally, the constructed Tableau Chain is used as a unified intermediate representation for downstream quantum program analysis tasks (§6), including quantum state reconstruction and entanglement characterization. These analysis procedures operate directly on the symbolic representation without requiring repeated quantum state reconstruction, enabling efficient analysis over the complete program execution.

\subsection{Stabilizer-Based Tableau Chain Construction}
\label{propagation}

\subsubsection{Tableau Node Representation}
\label{node}
Due to the efficiency of the stabilizer formalism in modeling quantum state evolution, our framework represents the evolution process of quantum programs through stabilizer-state propagation. Considering that quantum program analysis requires fine-grained analysis at each program step, we introduce a data structure $\mathcal{N}$ called a \textit{Tableau Node}.

\begin{equation}
    \mathcal{N}=(id, \mathcal{T}, W, Parent, Child)
\end{equation}
where $id$ denotes the program step associated with the current operation. $\mathcal{T}$ represents the list of stabilizer tableaux describing the current program state, while $W$ stores the corresponding weights associated with each tableau. $Parent$ and $Child$ are references to the Tableau Nodes corresponding to the previous and subsequent program steps, respectively, thereby linking individual nodes into the \textit{Tableau Chain} that represents the evolution of the entire quantum program.

At the beginning of program execution, we construct an initial node based on the number of qubits, denoted as $\mathcal{N}_0=(0, \mathcal{T}_0, W_0, None, None)$, where $\mathcal{T}_0$ is defined as a list containing only the stabilizer tableau $t_0$ corresponding to the initial state $\ket{00..0}$. Since the initial state consists of a single tableau component, the associated $W_0$ is initialized as a singleton list containing the value $1$, representing a weight of 1, indicating that the initial state is represented entirely by this single stabilizer tableau.
Subsequent nodes are generated incrementally through the application of quantum operations, thereby capturing the evolution of the program step by step.

The construction of each new node consists of two main stages. In the first stage, $\mathcal{T}$ and $W$ are updated according to the quantum operation at the next program step. This update procedure differs depending on whether the operation is a Clifford gate or a non-Clifford gate, and the two corresponding update strategies are detailed in section~\ref{clifford} and ~\ref{non-clifford}.
In the second stage, a global-phase recovery mechanism (section~\ref{global phase}) is applied to adjust $W$. After this correction, tableaux in $\mathcal{T}$ that correspond to the same stabilizer state are identified and merged by aggregating their associated weights (section~\ref{combination}), enabling an efficient reduction of redundant representations.

Finally, the newly constructed node updates its fields $id$, $Parent$, and $Child$, thereby forming a complete evolutionary chain that represents the execution history of the quantum program.

\subsubsection{Clifford Evolution}
\label{clifford}
The evolution of common Clifford gates, such as the Hadamard (H) gate, Pauli-X (X) gate, S gate, and CNOT gate, on a single stabilizer tableau has been extensively studied in prior work~\cite{aaronson2004improved}, and their corresponding transformation rules have been well established. For instance, applying these Clifford operations to the stabilizer representation of a Bell state yields the following evolution behavior:
\begin{equation}
\small
[+XX, +ZZ] \xrightarrow{H(0)} [+XZ, +ZX] \xrightarrow{X(0)} [-XZ, +ZX] \xrightarrow{S(0)} [-XZ, +ZY] \xrightarrow{\text{CNOT(0,1)}} [-XZ, +YX]
\end{equation}

The advantage of this representation lies in its compact and intuitive characterization of stabilizer evolution under successive Clifford operations, where the effect of each operation on the quantum state is explicitly traceable at every step. Compared with full circuit-level or state-vector representations, this approach avoids the exponential overhead associated with direct state simulation while preserving the complete semantic structure of the computation. As a result, it provides a scalable yet expressive foundation that facilitates subsequent verification and analysis tasks.

For these common Clifford gates $C$, we directly adopt the established transformation rules to define the update mechanism of $\mathcal{T}_{i}$ and $W_{i}$ within $\mathcal{N}_i=(i,\mathcal{T}_{i}, W_{i},\mathcal{N}_{i-1},\mathcal{N}_{i+1})$:
\begin{equation}
\mathcal{T}_{i-1}=[t_0,t_1,...t_k]\xrightarrow{C} \mathcal{T}_{i}=[C(t_0),C(t_1),...C(t_k)];\ \ W_{i-1} \xrightarrow{C} W_{i}=Phase(\mathcal{T}_{i-1},\mathcal{T}_{i},C, W_{i-1})
\end{equation}
In this process, since Clifford operations induce a deterministic one-to-one evolution on stabilizer tableaux, each tableau $t_{j}$ in the parent node can be evolved independently. The associated weights are initially propagated unchanged to the newly generated node.
It is worth noting that, during this transformation process, the weight $w_j \in W_{i-1}$ associated with a tableau $t_j$ can be directly propagated to its evolved tableau $C(t_j)$. This direct inheritance is valid when the quantum state is represented by a single stabilizer tableau.
However, when the state is represented as a weighted combination of multiple tableaux, the inherited weight $w_j$ may no longer be sufficient to accurately capture the evolution. In particular, global-phase differences introduced during tableau propagation can affect the relative coefficients among tableau components. Therefore, an additional global-phase recovery mechanism, implemented through the \textit{Phase} function, is required to adjust the weights accordingly. This process will be described in detail in Section ~\ref{global phase}.

In addition to these elementary Clifford gates, quantum programs may also contain composite gates composed of sequences of basic Clifford operations. Since the Clifford group is closed under composition, such composite gates remain Clifford operations and therefore also admit stabilizer tableau evolution.
However, explicit stabilizer tableau transformation rules are generally unavailable for such composite gates. Therefore, once a quantum operation $C_{comp}$ is identified as a Clifford operation, we decompose $C_{comp}$ into a sequence of elementary Clifford gates and apply the corresponding tableau evolution rules successively. By evolving each constituent Clifford gate step by step, the overall stabilizer tableau transformation induced by the composite gate can be obtained.
\begin{equation}
\begin{aligned}
&C_{comp}=C_{m}...C_{2}C_{1};\ \ C_{i} \in \{H,S,X,Y,Z,CNOT...\}\\ 
&\mathcal{T}_{i-1}=[t_0,t_1,...t_k]\xrightarrow{C_{1}} \mathcal{T}_{i}^{1}=[C_{1}(t_0),C_{1}(t_1),...C_{1}(t_k)]\\ 
&\xrightarrow{C_{2}} \mathcal{T}_{i}^{2}=[C_{2}C_{1}(t_0),C_{2}C_{1}(t_1),...C_{2}C_{1}(t_k)] \\
&\xrightarrow{...} \mathcal{T}_{i}^{m}=[C_{m}...C_{2}C_{1}(t_0),C_{m}...C_{2}C_{1}(t_1),...C_{m}...C_{2}C_{1}(t_k)]=\mathcal{T}_{i};\\ 
&W_{i-1} \xrightarrow{C_{comp}} W_{i}=Phase(\mathcal{T}_{i}^{m-1},\mathcal{T}_{i}^{m},C_{m},Phase(\mathcal{T}_{i}^{m-2},\mathcal{T}_{i}^{m-1},C_{m-1},...Phase(\mathcal{T}_{i-1},\mathcal{T}_{i}^{1},C_{1},W_{i-1})));
\end{aligned}
\end{equation}
Since all constituent operations $C_{i}$ are basic Clifford gates, their corresponding tableau transformations can be composed sequentially to obtain the overall evolution induced by the composite gate. Consequently, the stabilizer tableau evolution of $\mathcal{T}_{i}$ can be constructed through the successive application of the update rules for the individual sub-operations.
However, each sub-operation may introduce a global-phase change. Although such phase factors do not affect a single stabilizer tableau representation, they influence the coefficients associated with tableaux in our weighted linear-combination framework. Therefore, after each sub-operation, the weight vector $W$ is updated using the \textit{Phase} function, which computes and restores the corresponding phase contribution. This ensures that the accumulated weights remain consistent with the correct quantum-state evolution throughout the decomposition process.

\subsubsection{Non-Clifford Evolution}
\label{non-clifford}
Current research on the stabilizer formalism has primarily focused on the evolution of purely Clifford systems; therefore, its applicability remains limited and cannot fully capture arbitrary quantum operations. To overcome this limitation, we introduce a Pauli decomposition for non-Clifford operations, representing them as weighted combinations of Pauli operators and thereby enabling an extended stabilizer-based evolution framework. For an $n$-qubit system, we decompose a non-Clifford operation $\mathcal{G}$ as follows:
\begin{equation}
\mathcal{G}=\sum_{P\in \mathcal{P}_{n}}\alpha_{p}P;\ \ \mathcal{P}_{n}=\{I,X,Y,Z\}^{\otimes n}
\end{equation}
where $\mathcal{P}_{n}$ denotes the Pauli basis of the $n$-qubit Pauli group, and $\alpha_{p}$ represents the corresponding weight associated with each basis element in the decomposition. In this way, a non-Clifford operation $\mathcal{G}$ can be represented by two lists, $\mathbf{P}_{c}=[P_0, P_1,...P_l]$ and $\mathbf{W}_c=[\alpha_0, \alpha_1,...\alpha_l]$, corresponding to the set of Pauli components and their associated weights, respectively. Importantly, each element in $\mathbf{P}_{c}$ admits a valid stabilizer tableau transformation, allowing the evolution to be carried out within the stabilizer formalism framework.

Under this Pauli-decomposition framework, we can describe the evolution of the current node $\mathcal{N}_{i-1}$ when it encounters a non-Clifford operation $\mathcal{G}$. Since each Pauli basis element can be regarded as a basic Clifford operation (up to phase), for each tableau $t_j \in \mathcal{T}_{i-1}$ and its corresponding weight $w_j\in W_{i-1} $, we apply the Clifford evolution procedure described in Section ~\ref{clifford}:
\begin{equation}
\begin{aligned}
&t_{j}^{m}=P_{m}(t_{j});\ \ w_{j}^{m}=Phase(t_{j},\ t_{j}^{m},\ P_{m},\ w_{j}\cdot\alpha_{m});\ \ P_m\in \mathbf{P}_{c};\ \ \alpha_{m} \in \mathbf{W}_c \\ 
&\mathcal{T}_{i-1}=[t_0,t_1,...t_k]\xrightarrow{\mathcal{G}} \mathcal{T}_{i}=[t_{0}^{0},...t_{0}^{l},t_{1}^{0},...t_{1}^{l},...t_{k}^{0}...t_{k}^{l}];\\ 
&W_{i-1} \xrightarrow{\mathcal{G}} W_{i}=[w_{0}^{0},...w_{0}^{l},w_{1}^{0},...w_{1}^{l},...w_{k}^{0}...w_{k}^{l}];
\end{aligned}
\end{equation}
During this process, the newly generated node $\mathcal{N}_{i}$ may contain a large number of stabilizer tableaux in $\mathcal{T}_{i}$ along with their associated weights in $W_i$. However, redundancies may arise due to the presence of equivalent tableaux. To address this issue, we perform an equivalence-checking procedure and merge identical tableaux by aggregating their corresponding weights. A detailed description of this consolidation procedure is provided in Section ~\ref{combination}.

\subsubsection{Global Phase Recovery}
\label{global phase}
To ensure phase consistency across tableau transformations, StabQ introduces a support-driven global phase recovery mechanism. Unlike prior formulations that solve Pauli decompositions over a symplectic basis, our implementation operates directly on the amplitude support induced by stabilizer tableaux.

Given a stabilizer tableau $t\in \mathcal{T}_{i-1}$, we first extract its computational-basis support $\mathcal{B}(t)$ and compute the corresponding phase map $\Phi_t: \mathcal{B}(t) \rightarrow \{1, -1, i, -i\}$ using the internal tableau phase structure. For a transformed tableau $t' \in \mathcal{T}_{i}$, we similarly compute $\Phi_{t'}$ over $\mathcal{B}(t')$.

To account for gate-induced basis permutations and phase distortions, we construct a gate-aware consistency mapping $\mathcal{M}_C$ induced by the applied operation $C$. Specifically, $\mathcal{M}_C$ describes how basis states and phases are transformed under Clifford gates (e.g., CX, CZ, H, S, S$^\dagger$), while non-Clifford gates are handled via their Pauli decomposition.

We then define a transformed reference phase:
\[
\widetilde{\Phi}_t = \mathcal{M}_C(\Phi_t),
\]
which is aligned with the support of $t'$. The global phase correction factor is computed as:
\[
\gamma = \frac{\widetilde{\Phi}_t(b)}{\Phi_{t'}(b)}, \quad b \in \mathcal{B}(t'),
\]
where $b$ ranges over the support elements. If the ratio is consistent across all support elements, we obtain a unique global phase correction $\gamma$; otherwise, we default to $\gamma = 1$.

For Clifford operations, $\mathcal{M}_C$ is implemented explicitly as gate-dependent rewrites of bitstrings and phase flips (e.g., CZ induces conditional phase inversion, H induces basis swapping, S induces $i$-phase accumulation). For non-Clifford gates, we directly apply this procedure to each Pauli-decomposed branch.

Finally, the resulting global phase correction $\gamma$ is absorbed into the weight of the corresponding tableau component:
\[
w_i \leftarrow \gamma \cdot w_i;\ \ \ \  w_i \in W_{i-1}
\]
ensuring phase-consistent tableau evolution throughout the Tableau Chain. The entire procedure is implemented by the $Phase(\mathcal{T}_{i-1},\mathcal{T}_{i},C, W_{i-1})$ function.

\subsubsection{Tableau Consolidation}
\label{combination}
The introduction of non-Clifford operations into stabilizer-based state propagation leads to a fundamental representational challenge. While Clifford gates preserve the one-to-one evolution of stabilizer tableaux, non-Clifford gates are expressed in our framework as linear combinations of multiple stabilizer components. As a result, the number of tableaux grows rapidly during circuit propagation, significantly increasing both memory consumption and computational cost.

To address this issue, we introduce a \emph{tableau consolidation} procedure that reduces redundancy in the symbolic state representation by merging equivalent stabilizer components.
A key observation is that the stabilizer subgroup $S_j$ uniquely characterizes the underlying stabilizer state, whereas the destabilizer component $D_j$ serves purely as an auxiliary gauge choice required for maintaining a consistent Clifford update rule. It does not contribute to the physical state representation and is not observable in the stabilizer semantics of the state.
\begin{lemma}[Stabilizer Equivalence Implies State Equivalence]
Let $S_1$ and $S_2$ be two stabilizer groups acting on the same set of qubits. If $S_1 = S_2$, then the corresponding stabilizer states are identical up to a global phase, i.e.,
$|S_1\rangle \equiv e^{i\theta} |S_2\rangle.$
\end{lemma}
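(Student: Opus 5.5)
The approach is to show that the common stabilizer group $S=S_1=S_2$ determines a one-dimensional subspace of $\mathcal{H}_{2^n}$. Any two unit vectors in that subspace then differ only by a scalar of modulus one, which is the global phase $e^{i\theta}$. By the definition in Section~\ref{stab}, $S$ is abelian, has $n$ independent generators $g_1,\dots,g_n$, and does not contain $-I$. The state $|S\rangle$ is a simultaneous $+1$ eigenvector of all $g\in S$. So the plan is to identify the joint $+1$ eigenspace
\[
V_S=\{\,|\psi\rangle : g|\psi\rangle=|\psi\rangle,\ \forall g\in S\,\}
\]
and prove $\dim V_S=1$. Since $|S_1\rangle$ and $|S_2\rangle$ are both normalized elements of the same space $V_S$, we get $|S_1\rangle=c\,|S_2\rangle$ with $|c|=1$, so $c=e^{i\theta}$.

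The key step is the dimension count, which I would carry out via the projector
\[
\Pi_S=\frac{1}{|S|}\sum_{g\in S} g=\prod_{k=1}^{n}\frac{I+g_k}{2}.
\]
First, $-I\notin S$ forces every element of $S$ to be Hermitian with eigenvalues $\pm1$. If some $g\in S$ satisfied $g^2=-I$, then $-I$ would lie in $S$. Hence each factor $(I+g_k)/2$ is the orthogonal projector onto the $+1$ eigenspace of $g_k$. Because the generators commute, the product of these factors is the projector onto $V_S$. Expanding the product shows it equals the group average, since independence of the generators gives $|S|=2^n$ and every element of $S$ appears exactly once. Then $\dim V_S=\operatorname{tr}\Pi_S=2^{-n}\sum_{g\in S}\operatorname{tr} g$. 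Every nonidentity Pauli string is traceless, and the only element of $S$ proportional to $I$ is $I$ itself, because $-I,\pm iI\notin S$. So $\operatorname{tr}\Pi_S=2^{-n}\cdot 2^n=1$.

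The main obstacle is establishing $|S|=2^n$ and the claim that only $I$ is proportional to the identity. Both rest on what "independent" means. I will read independence as meaning that no generator lies in the group generated by the others. Then every element of $S$ can be written uniquely as $g_1^{e_1}\cdots g_n^{e_n}$ with $e\in\{0,1\}^n$, using $g_k^2=I$. Uniqueness holds because a nontrivial relation $\prod g_k^{e_k}=I$ would express one generator through the others. This yields $|S|=2^n$ and confirms the expansion of the product. Finally, I will remark that the destabilizer part $D$ of a tableau never enters $V_S$. This gives the intended consequence: tableaux with equal stabilizer parts represent the same state up to a phase, and that phase is what the weights in $W$ must absorb during consolidation.
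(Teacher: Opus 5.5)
Your proof is correct and follows the same skeleton as the paper's sketch. Equal groups define the same joint $+1$ eigenspace, that space is one-dimensional, and so the two normalized states differ by a unit scalar $e^{i\theta}$. The difference is that the paper only asserts one-dimensionality, by appeal to stabilizer groups being maximal abelian subgroups that exclude $-I$. You actually prove it via $\mathrm{tr}\,\Pi_S = 2^{-n}\sum_{g\in S}\mathrm{tr}\,g = 1$, using exactly the hypotheses in the paper's definition ($n$ independent commuting generators and $-I\notin S$). This makes your argument self-contained where the paper's is not.
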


\textbf{Proof sketch.}
A stabilizer group S uniquely defines a quantum state $|S\rangle$ as the simultaneous +1 eigenspace of all operators in S. Since stabilizer groups are maximal abelian subgroups of the Pauli group (excluding $-I$), each valid stabilizer group corresponds to a one-dimensional subspace in the Hilbert space. Therefore, if two stabilizer groups $S_1$ and $S_2$ are identical, they define the same set of stabilizer constraints and hence the same one-dimensional subspace. It follows that their associated stabilizer states differ at most by a global phase, which is physically irrelevant. \hfill $\square$

Based on this observation, we define an equivalence relation over tableaux induced by their stabilizer structure:
\begin{equation}
t_j \sim t_k \quad \Longleftrightarrow \quad S_j = S_k
\end{equation}
All tableaux belonging to the same equivalence class represent identical stabilizer states and are therefore merged into a single representative. The associated weights are aggregated as
\begin{equation}
w_{[S]} = \sum_{t_k \in [S]} w_k
\end{equation}
where [S] denotes the equivalence class induced by stabilizer group S.

This consolidation preserves the semantic meaning of the symbolic state, while eliminating redundant representations that differ only in their auxiliary destabilizer structure. In particular, since all physically relevant phase information has been explicitly accounted for in the weight domain prior to consolidation as described in Section ~\ref{global phase}, the merging operation does not introduce any loss of state information.
From an implementation perspective, each tableau is canonicalized with respect to its stabilizer generators, and this canonical stabilizer representation is used as a hash key to group equivalent tableaux. The consolidation step is applied periodically during state propagation, ensuring that the symbolic state size remains controlled even in the presence of repeated non-Clifford expansions.

Overall, tableau consolidation provides a significant reduction in propagation overhead by exploiting the redundancy in stabilizer-based representations induced by non-Clifford decompositions, while maintaining exact consistency at the level of stabilizer state semantics.

\subsubsection{Algorithm}
\label{algorithm}

This section explains how StabQ builds the complete Tableau Chain model for a target quantum program.
Algorithm \autoref{alg1} presents the overall flow of the process.

\begin{algorithm}
	\renewcommand{\algorithmicrequire}{\textbf{Input:}}
	\renewcommand{\algorithmicensure}{\textbf{Output:}}
	\caption{: Tableau Chain Construction}
	\label{alg1}
	\begin{algorithmic}[1]
        \REQUIRE $P\gets$ target quantum program \\
        \ENSURE $\mathcal{N}_{0}\gets$ the root node of the entire Tableau Chain
        \STATE $n \gets qubits\_num(P)$ \hfill \textcolor{teal}{// Qubits Number of $P$}
        \STATE $t_0 \gets get\_tableau(0,n)$ \hfill \textcolor{teal}{// Generate initial tableau of P}
        \STATE $id \gets 0$ \hfill \textcolor{teal}{// Program step counter}
		\STATE $\mathcal{N}_{0} \gets (id, [t_{0}], [1], None, None)$ \hfill \textcolor{teal}{// Initialize root node}
        \STATE $\mathcal{N}_{c} \gets \mathcal{N}_{0}$
		\FOR{$op$ in $P$}
            \STATE $id \gets id +1$
            \STATE $\mathcal{T}, W \gets \mathcal{N}_{c}.\mathcal{T},\mathcal{N}_{c} .W$
            \IF{$Clifford\_checker(op)$} 
            \STATE \textcolor{teal}{// Decompose into a sequence of basic Clifford operations }
            \STATE $op\_list \gets basic\_Clifford(op)$
            \STATE $\mathcal{T}_{c}, W_{c} \gets \mathcal{T}, W$
            \FOR{$gate$ in $op\_list$}
            \STATE $\mathcal{T}_{new} \gets evolution(\mathcal{T}_{c},\  gate)$
            \STATE $W_{new} \gets Phase(\mathcal{T}_{c},\  \mathcal{T}_{new},\  gate,\  W_{c})$
            \STATE $\mathcal{T}_{c}, W_{c} \gets \mathcal{T}_{new}, W_{new}$
            \ENDFOR
            \ELSE
            \STATE \textcolor{teal}{// Decompose into Pauli operations with weights} 
            \STATE $op\_list,\  w \gets Pauli\_decomposition(op)$
            \STATE $\mathcal{T}_{c}, W_{c} \gets \emptyset, \emptyset$
            \FOR{$gate$ in $op\_list$}
            \STATE $w_{gate} \gets w.get[gate]$
            \STATE $\mathcal{T}_{new} \gets evolution(\mathcal{T},\  gate)$
            \STATE $W_{new} \gets Phase(\mathcal{T},\  \mathcal{T}_{new}, \ gate,\  w_{gate}\cdot W) $
            \STATE $\mathcal{T}_{c}, W_{c} \gets \mathcal{T}_{c} \cup \mathcal{T}_{new},\  W_{c} \cup W_{new}$
            \ENDFOR
            \ENDIF

            \STATE $\mathcal{T}_{c}, W_{c} \gets tableau\_consolidation(\mathcal{T}_{c}, W_{c})$
            \STATE $\mathcal{N}_{new} \gets (id, \mathcal{T}_{c}, W_{c}, \mathcal{N}_{c}, None)$
            \STATE $\mathcal{N}_{c}.Child \gets \mathcal{N}_{new}$
            \STATE $\mathcal{N}_{c} \gets \mathcal{N}_{new}$
        \ENDFOR
	\end{algorithmic}  
\end{algorithm}

Given a quantum program $P$, StabQ first determines the number of qubits and constructs the initial stabilizer tableau $t_0$, representing the computational basis state 
$|0\rangle^{\otimes n}$ (line 1,2). The tableau and its initial weight are encapsulated in the root node $\mathcal{N}_0$ (line 4), which serves as the starting point of the Tableau Chain.
For each quantum operation $op$ in $P$, StabQ first determines whether $op$ is a Clifford gate (line 9). If so, the operation is decomposed into a sequence of supported elementary Clifford gates $op\_list$ (line 11). Each elementary gate is processed sequentially by first updating the current tableaux list according to the tableau evolution rules (line 14), followed by weight phase recovery (line 15). The resulting tableaux list and weight list become the symbolic state for the next elementary gate in $op\_list$ (line 16).
Otherwise, if $op$ is a non-Clifford gate, StabQ performs a Pauli decomposition to express the operation as a weighted linear combination of Pauli operators (line 20). Since the resulting Pauli terms are independent, each Pauli operator is processed independently on the current tableaux list (lines 22–25). The corresponding weights are updated according to the Pauli coefficients and the recovered phases (line 25), and all evolved tableaux and weights are aggregated into the new symbolic state (line 26).
After processing each quantum operation $op$, StabQ performs Tableau Consolidation to merge equivalent tableaux and eliminate redundant symbolic states (line 29). The consolidated tableaux list $\mathcal{T}_{c}$ and weight list $W_{c}$ are encapsulated into a new Tableau Node, which is linked to its predecessor to extend the Tableau Chain (lines 30–32).

After all quantum operations have been processed, the root node $\mathcal{N}_{0}$, together with all linked Tableau Nodes, forms the complete Tableau Chain, which serves as the symbolic model for subsequent quantum program analysis.




\subsubsection{Illustrative Example}
\label{example}
To illustrate how the Tableau Chain is constructed and updated during program execution, we consider a simple Bell-state preparation circuit followed by two non-Clifford T gates. Although small in size, this example captures the key mechanisms of the framework, including Clifford propagation, non-Clifford decomposition, global phase recovery, and tableau consolidation. 
This 2-qubit circuit is defined as
\[
H_0 \rightarrow \mathrm{CX}_{0,1} \rightarrow T_0 \rightarrow T_0.
\]
The evolution under the first two Clifford gates is as follows:
\[
\small
\begin{aligned}
&InitialState\ \ket{00}:\mathcal{T}_{0}=Stabilizer:['+IZ','+ZI']; Destabilizer:['+IX','+XI'] \\
&\xrightarrow{H_{0}} \mathcal{T}_{1}=[Stabilizer:['+IX','+ZI'];Destabilizer:['+IX','+XI']]\ \  (GlobalPhase_{1} = 1)\\ 
&\xrightarrow{CX_{0,1}} \mathcal{T}_{2}=[Stabilizer:['+XX','+ZZ'];Destabilizer:['+IZ','+XI']]\ \ (GlobalPhase_{2} =1) \\
&W_{0} = [1] \xrightarrow{H_{0}} W_{1} = [1*GlobalPhase_{1}]=[1] \xrightarrow{CX_{0,1}} W_{2}=[1*GlobalPhase_{2}]=[1]
\end{aligned}
\]
Upon encountering the non-Clifford T gate, StabQ applies Pauli decomposition to express the operation as a weighted sum of Pauli operators, thereby enabling its integration into the Tableau Chain propagation framework:
\[
T = [['I',\ 'Z'],\ \ \text{coeffs}=[0.85355339+0.35355339i,\ \ 0.14644661-0.35355339i]]
\]
where $r_{I}=0.85355339+0.35355339i$ and $r_{Z}=0.14644661-0.35355339i$ denote the coefficients associated with the Pauli operators $I$ and $Z$, respectively. Therefore, the subsequent non-Clifford propagation proceeds as follows:
\[
\small
\begin{aligned}
&Stabilizer:['+XX','+ZZ'];Destabilizer:['+IZ','+XI'] \\
&\xrightarrow{(I, \ r_{I})} t_0 = Stabilizer:['+XX','+ZZ'];Destabilizer:['+IZ','+XI']  \ \ (GlobalPhase_{3}^{1} = 1) \\
&\xrightarrow{(Z, \ r_{Z})} t_1=Stabilizer:['-XX','+ZZ'];Destabilizer:['+IZ','+XI']  \ \ (GlobalPhase_{3}^{2} = 1) \\
&\mathcal{T}_{3}= [t_0,t_1]\\
&W_{3} = [w_{0}*r_{I}*GlobalPhase_{3}^{1},\ \  w_{0}*r_{Z}*GlobalPhase_{3}^{2}] = [r_{I},r_{Z}];\ \ w_{0}=1\ \  w_{0}\in W_{2}
\end{aligned}
\]
After applying a second T gate, the evolution proceeds as follows:
\[
\small
\begin{aligned}
&t_0 =Stabilizer:['+XX','+ZZ'];Destabilizer:['+IZ','+XI'] \\
&\xrightarrow{(I, \ r_{I})} t'_{0}=Stabilizer:['+XX','+ZZ'];Destabilizer:['+IZ','+XI']  \ \ (GlobalPhase_{4}^{1} = 1) \\
&\xrightarrow{(Z, \ r_{Z})} t''_{0}=Stabilizer:['-XX','+ZZ'];Destabilizer:['+IZ','+XI']  \ \ (GlobalPhase_{4}^{2} = 1) \\
&--------------------------------------\\
&t_1=Stabilizer:['-XX','+ZZ'];Destabilizer:['+IZ','+XI'] \\
&\xrightarrow{(I, \ r_{I})} t'_{1}=Stabilizer:['-XX','+ZZ'];Destabilizer:['+IZ','+XI']  \ \ (GlobalPhase_{4}^{3} = 1) \\
&\xrightarrow{(Z, \ r_{Z})} t''_{1}=Stabilizer:['+XX','+ZZ'];Destabilizer:['+IZ','+XI']  \ \ (GlobalPhase_{4}^{4} = 1) \\
&--------------------------------------\\
&\mathcal{T}_4 = [t'_{0},t''_{0},t'_{1},t''_{1}] \\
&W_{4} = [w_{0}*r_{I}*GlobalPhase_{4}^{1},\ \  w_{0}*r_{Z}*GlobalPhase_{4}^{2},\ w_{1}*r_{I}*GlobalPhase_{4}^{3}, \\
&\ \ \ \ \ \ \ \ w_{1}*r_{Z}*GlobalPhase_{4}^{4}] = [r_{I}^{2},r_{I}*r_{Z},r_{I}*r_{Z},r_{z}^2];\ \ w_{0}=r_{I};\ w_{1}=r_{Z};\ w_{0},w_{1}\in W_{3}
\end{aligned}
\]
At this point, we observe that the four generated stabilizer sets contain identical stabilizer components. StabQ therefore performs a tableau consolidation procedure, yielding a simplified tableau representation as follows:
\[
\small
\begin{aligned}
&\mathcal{T}_{4}= [[Stabilizer:['+XX','+ZZ'];Destabilizer:['+IZ','+XI']], \\
&\ \ \ \ \ \ \ \ [Stabilizer:['-XX','+ZZ'];Destabilizer:['+IZ','+XI']]] \\
&W_{4} = [r_{I}^{2}+r_{Z}^{2},\ 2*r_{I}*r_{Z}]
\end{aligned}
\]
This completes the construction of the Tableau Chain for the entire program. The resulting representation enables static analysis at any execution step in a modular manner. The detailed analysis procedures are presented in the following section.

\subsection{Tableau-Based Program Analysis Framework}
\label{analysis method}
After constructing the complete program evolution within the stabilizer formalism, we perform a systematic, step-wise analysis over the entire execution trace. Rather than treating the quantum program as a monolithic object, our framework operates directly on the intermediate representations maintained in the Tableau Chain, where each node encodes the symbolic quantum state together with its evolution history under both Clifford and non-Clifford transformations.

At a high level, the analysis is organized into three complementary components. First, we characterize the operational structure of the program by identifying the Clifford or non-Clifford nature of each quantum gate, which determines its corresponding evolution rule at the tableau level. Second, we reconstruct and track the symbolic quantum state throughout execution, where each program state is represented as a weighted ensemble of stabilizer tableaux, enabling consistent state interpretation under phase-aware evolution. Third, we analyze entanglement structure across qubits by exploiting the stabilizer dependencies encoded within each tableau representation, allowing efficient extraction of correlation patterns without resorting to full state-vector simulation.

By leveraging the unified tableau representation maintained in the Tableau Chain, all three types of analyses can be performed incrementally at each program step. This provides a fine-grained and semantically rich view of program evolution, and serves as the foundation for the detailed discussions in the following subsections, where we elaborate on Clifford characterization, state reconstruction, and entanglement analysis respectively.



\subsubsection{Clifford Property Analysis}
\label{rate}
To distinguish between Clifford and non-Clifford operations, we begin from the observation that these two classes of quantum gates exhibit fundamentally different computational behaviors within the stabilizer formalism. Clifford operations preserve the Pauli group under conjugation and admit efficient classical simulation, whereas non-Clifford operations are essential for universal quantum computation and introduce non-trivial computational complexity. Therefore, identifying whether a given operation is Clifford or non-Clifford is a prerequisite for understanding the computational structure of a quantum program.

To this end, we adopt an operational verification procedure based on the conjugation action on the Pauli group. For an $n$-qubit system, we construct a generating set of the Pauli group consisting of single-qubit $X_i$ and $Z_i$ operators embedded into the full Hilbert space via tensor products with identities on all other qubits, yielding
$G=\{X_1,Z_1,\dots,X_n,Z_n\}$
Given a unitary operator $U$, we compute its conjugation action:
\begin{equation}
P' = UPU^\dagger,\qquad \forall P\in G
\end{equation}
and verify whether each resulting operator remains within the Pauli group. In implementation, this is achieved by decomposing $P'$ into the Pauli basis and checking whether exactly one Pauli term has non-zero weight (up to a global phase and numerical tolerance). An operation is classified as a Clifford operation if and only if all generators are mapped to Pauli operators under conjugation; otherwise, it is identified as a non-Clifford operation. This criterion is sufficient because the Pauli group is generated by single-qubit $X$ and $Z$ operators, while conjugation preserves the group multiplication structure.

Beyond serving as a structural classification, the distinction between Clifford and non-Clifford operations plays a fundamental role in quantum program analysis. Clifford operations induce deterministic stabilizer tableau updates and remain efficiently simulable within the stabilizer formalism. In contrast, non-Clifford operations trigger Pauli-based decompositions that transform a single tableau representation into a weighted ensemble of multiple tableaux, thereby increasing symbolic-state complexity and introducing branching during program evolution. Consequently, the occurrence and distribution of non-Clifford operations provide a direct indication of the non-classical computational resources utilized by a quantum program and help identify regions where classical simulability begins to deteriorate.

During the construction of the Tableau Chain, this analysis enables the framework to effectively distinguish between Clifford and non-Clifford operations and select the corresponding evolution mechanism accordingly. Clifford operations are propagated through deterministic tableau evolution rules, whereas non-Clifford operations invoke the weighted decomposition and propagation procedure described in the previous section. As a result, Clifford-property analysis serves as a fundamental decision point throughout the symbolic evolution process and provides the basis for subsequent procedures such as state reconstruction, global phase recovery, and tableau consolidation.

Furthermore, we compute the proportion of Clifford operations appearing in the quantum program and report it as an analysis metric, referred to as the \emph{Clifford Rate}:
\begin{equation}
\mathrm{CR}(Q)=\frac{N_{\mathrm{Clifford}}}{N_{\mathrm{Total}}}
\end{equation}
where $N_{\mathrm{Clifford}}$ denotes the number of Clifford operations in the program and $N_{\mathrm{Total}}$ denotes the total number of quantum operations. Since Clifford operations admit efficient stabilizer-based simulation, whereas non-Clifford operations are the primary source of quantum computational complexity, the Clifford Rate provides an informative characterization of the program's simulability and its dependence on non-classical computational resources. A higher Clifford Rate generally indicates that a larger portion of the program remains within the efficiently simulable stabilizer regime, whereas a lower ratio suggests stronger reliance on non-Clifford resources and potentially higher symbolic-analysis complexity.

Overall, this analysis not only guides the selection of appropriate evolution rules during Tableau Chain construction but also provides a quantitative measure of the computational structure and complexity of the quantum program.




\subsubsection{Support Analysis and Quantum State Reconstruction}
\label{state}
While the Tableau Chain provides an efficient symbolic representation for propagating quantum programs under both Clifford and non-Clifford operations, it primarily captures structural evolution at the level of stabilizer generators rather than explicit quantum state semantics. However, many analysis tasks require access to state-level information that is not directly encoded in the tableau representation.
In particular, support analysis is essential for understanding the distribution of computational basis states that contribute to the quantum state, which is crucial for reasoning about sparsity, interference structure, and measurement outcomes. Without explicit support information, the symbolic representation remains disconnected from observable behavior in the computational basis.
Furthermore, state reconstruction is necessary to bridge the gap between tableau-level symbolic evolution and amplitude-level quantum semantics. Since non-Clifford operations introduce weighted linear combinations of tableaux, the resulting representation cannot be interpreted as a single stabilizer state. Therefore, reconstructing the full quantum state from tableau ensembles becomes essential for capturing interference effects and for enabling downstream analyses such as entanglement characterization and probability estimation.

We briefly derive the linear constraint formulation used for support extraction from stabilizer tableaux. To simplify the presentation, we first consider the support reconstruction problem for a single tableau. The extension to a weighted collection of tableaux maintained in a Tableau Chain will be discussed subsequently.
Let $t=(S,D)$ be a stabilizer tableau with stabilizer generators $S=\{g_1,\dots,g_m\}$. Each generator $g_i$ is an $n$-qubit Pauli operator and can be written (up to a global phase) in the standard form
\begin{equation}
g_i = i^{\kappa_i} X^{a_i} Z^{b_i}
\end{equation}
where $a_i,b_i \in \mathbb{F}_2^n$ are binary vectors indicating the support of the $X$ and $Z$ components, respectively, and $\kappa_i \in \{0,1,2,3\}$ specifies the phase factor of the Pauli operator.

Consider a computational basis state $|y\rangle$, where $y \in \mathbb{F}_2^n$. The action of a Pauli operator on the $j$-th qubit satisfies
\begin{equation}
Z_j |y\rangle = (-1)^{y_j} |y\rangle, \qquad X_j |y\rangle = |y \oplus e_j\rangle, \qquad g_i |y\rangle
= i^{\kappa_i} (-1)^{b_i \cdot y} |y \oplus a_i\rangle.
\end{equation}

For $|y\rangle$ to belong to the support of the stabilizer state, it must satisfy the stabilizer constraint
\begin{equation}
g_i |y\rangle = |y\rangle, \quad \forall i
\end{equation}

Therefore, for a basis state $|y\rangle$ to contribute to a stabilizer-invariant superposition, it must be mapped by the $X^{a_i}$ component onto another basis state that remains within the support. This induces a consistency condition that the support is closed under the bit-flip structure defined by $a_i$, i.e., basis states appear in cosets of the subspace generated by $\{a_i\}$.

After restricting the analysis to a fixed coset representative (i.e., quotienting out the $X$-induced orbit structure), the bit-flip contribution is absorbed into the support structure, and only the phase constraint remains. In this reduced representation, the stabilizer condition becomes
\begin{equation}
i^{\kappa_i} (-1)^{b_i \cdot y} = (-1)^{\lambda_i},
\end{equation}
where $\lambda_i \in \mathbb{F}_2$ encodes the effective sign of the stabilizer generator.

Discarding the global $i^{\kappa_i}$ phase (which can be absorbed into generator normalization), we obtain the phase consistency condition
\begin{equation}
(-1)^{b_i \cdot y} = (-1)^{\lambda_i},
\end{equation}
which yields the linear constraint
\begin{equation}
b_i \cdot y = \lambda_i \pmod 2.
\end{equation}
Stacking all stabilizer generators yields a linear system:
\begin{equation}
Zy = r \pmod 2
\end{equation}
where $Z \in \mathbb{F}_2^{m \times n}$ is formed by stacking vectors $b_i^\top$, and $r \in \mathbb{F}_2^m$ collects the phase bits $\lambda_i$.

Therefore, the computational basis support of the stabilizer state is characterized as the solution space of this linear system. In practice, we apply Gaussian elimination over $\mathbb{F}_2$ to this linear system, which effectively eliminates the contributions induced by the $X$ components and reduces the problem to solving for the affine solution space of a single stabilizer tableau. This procedure yields the complete computational basis support associated with the given stabilizer tableau.
The solution set defines the support of the stabilizer state:
\begin{equation}
\mathcal{S}_T = \{ y \in \mathbb{F}_2^n \mid Zy = r \}
\end{equation}
which forms an affine subspace over $\mathbb{F}_2$.

A key property of stabilizer states is that all computational basis states in the support have equal amplitude magnitude. Hence, the state can be expressed as
\begin{equation}
|\psi_T\rangle = \frac{1}{\sqrt{|\mathcal{S}_T|}} \sum_{y \in \mathcal{S}_T} \omega(y)\, |y\rangle,
\end{equation}
where $\omega(y) \in \{\pm1,\pm i\}$ denotes the relative phase determined by stabilizer constraints.

To recover $\omega(y)$, we fix a reference element $y_0 \in \mathcal{S}_T$ and set $\omega(y_0)=1$, while the remaining phases are determined by consistency conditions induced by the stabilizer generators. Specifically, $X$- and $Y$-type components introduce sign flips and imaginary phase contributions under symplectic propagation, allowing a consistent phase assignment over the entire support.

For a node containing multiple tableaux $\mathcal{T}=\{t_i\}$ with associated weights $W=\{w_i\}$, each tableau is independently reconstructed as
\begin{equation}
|\psi_{t_i}\rangle = \frac{1}{\sqrt{|\mathcal{S}_{t_i}|}} \sum_{y \in \mathcal{S}_{t_i}} \omega_i(y)\, |y\rangle
\end{equation}
and the overall quantum state is obtained via linear combination:
\begin{equation}
|\Psi\rangle = \sum_i w_i |\psi_{t_i}\rangle
\end{equation}
leading to the amplitude expression
\begin{equation}
A(y) = \sum_i w_i \frac{\omega_i(y)}{\sqrt{|\mathcal{S}_{t_i}|}}
\end{equation}

This reconstruction procedure provides a mapping from tableau ensembles to explicit quantum state representations, enabling a consistent interpretation of computational basis support and amplitude structure. It thereby serves as the basis for subsequent analyses within the Tableau Chain framework, including entanglement characterization and interference evaluation.

\subsubsection{Entanglement Analysis}
\label{entangle}
Given the reconstructed quantum state within the Tableau Chain, we characterize entanglement properties among qubits, which play a central role in understanding quantum correlations, information propagation, and the structural complexity of program evolution. In particular, entanglement analysis provides essential insight into how quantum information is distributed across subsystems, and serves as a key indicator of non-classical behavior in intermediate computational states.
In contrast to full state-vector simulation, our framework does not uniformly rely on explicit construction of reduced density matrices. Instead, we exploit the structure of stabilizer-based representations to compute entanglement efficiently whenever possible. In the stabilizer regime, entanglement can be inferred directly from tableau structure without density matrix reconstruction, whereas in more general settings involving tableau ensembles or non-Clifford-induced superpositions, we resort to reduced density matrix purity as a unified quantitative measure. This hybrid design allows us to balance computational efficiency with general applicability, while maintaining consistency with amplitude-level quantum semantics.

We consider a bipartition of the qubit register into subsystem $A$ and its complement $B$. To quantify entanglement, we use the purity of the reduced state on subsystem $A$, defined as $\mathrm{Tr}(\rho_A^2)$, as a unified measure of quantum correlation strength. However, the computation of this quantity does not always require explicit construction of the reduced density matrix. Instead, we exploit the structure of the tableau representation to derive more efficient evaluation strategies whenever applicable.
Our analysis distinguishes three cases depending on the representational structure of the tableau ensemble.

\vspace{0.5em}
\textit{Case 1: Single Stabilizer State.}
When the state is represented by a single stabilizer tableau, entanglement between subsystems can be evaluated directly from the stabilizer structure without explicit state-vector reconstruction. 
This follows the stabilizer entanglement formalism introduced by Fattal et al.~\cite{fattal}, which characterizes bipartite entanglement of stabilizer states through the algebraic structure of stabilizer generators.

Let $t$ denote a stabilizer tableau on $n$ qubits, and let $A \subseteq [n]$ be a subsystem of interest with complement $B = [n]\setminus A$. 
We represent the stabilizer group in its binary symplectic form $(X \mid Z)$, where each row corresponds to an independent stabilizer generator over $\mathbb{F}_2$. To analyze subsystem entanglement, we restrict each generator to subsystem $A$ by deleting the columns associated with subsystem $B$. This yields a reduced symplectic representation $(X_A \mid Z_A)$, which captures the stabilizer constraints that remain observable on subsystem $A$. 
The entanglement between $A$ and $B$ is determined by the reduction in the number of independent stabilizer constraints induced by this restriction, which quantifies the loss of locally enforceable degrees of freedom due to correlations across the bipartition. 
Let $S_A$ denote the subgroup of stabilizers whose support is entirely contained in subsystem $A$. Let $d_A=\dim(S_A)$ denote the number of independent local stabilizer generators. In our implementation, $d_A$ is computed by restricting the stabilizer representation to subsystem $A$ and applying Gaussian elimination to identify the independent generators belonging to $S_A$. The entanglement entropy is then given by
\begin{equation} 
E(A) = |A| - d_A. 
\end{equation} 

This quantity corresponds, up to local Clifford equivalence, to the number of maximally entangled Bell pairs across the bipartition and is consistent with the stabilizer entanglement entropy formalism~\cite{fattal}. Importantly, this computation operates entirely within the algebraic stabilizer representation, avoiding explicit construction of reduced density matrices and enabling efficient entanglement evaluation directly from the tableau structure.

\textit{Case 2: Product-State Tableau Mixture.}
When the state at a node is represented as a weighted ensemble of stabilizer tableaux $\mathcal{T} = \{t_i\}$ with coefficients $W=\{w_i\}$, we consider the regime in which each tableau induces a product state with respect to the bipartition $A|B$ (e.g., after local Clifford disentangling or when no intra-branch entanglement is present). In this setting, entanglement originates solely from coherent interference between different tableau branches, rather than from correlations within individual branches.
Let $A \subseteq [n]$ be a subsystem and $B = [n]\setminus A$ its complement. Each tableau $t_i$ defines a pure stabilizer state $|\psi_i\rangle$, whose subsystem restrictions are denoted by $|\psi_i^A\rangle$ and $|\psi_i^B\rangle$, obtained by projecting the stabilizer structure onto the corresponding qubit subsets and normalizing the resulting states.
To capture cross-branch interference effects, we define subsystem overlap matrices
\begin{equation}
A_{ij} = \langle \psi_i^A | \psi_j^A \rangle, \quad
B_{ij} = \langle \psi_i^B | \psi_j^B \rangle,
\end{equation}
which encode the overlap structure induced by the tableau ensemble on each subsystem.
The purity of the reduced density matrix on subsystem $A$ is computed using the swap-trick formulation
\begin{equation}
\mathrm{Tr}(\rho_A^2)
=
\sum_{i,j,k,l}
w_i \overline{w_j} w_k \overline{w_l}
\cdot
A_{jk} A_{li} B_{ij} B_{kl}.
\end{equation}
This expression arises from expanding the tensor-product structure of $\rho \otimes \rho$ and applying the swap operator on subsystem $A$, which couples cross-branch overlaps within $A$ while contracting complementary contributions on subsystem $B$.
Importantly, all quantities are computed directly from tableau-induced overlaps, avoiding explicit construction of density matrices. This provides an efficient mechanism to capture entanglement generated purely by interference between stabilizer branches in the ensemble representation.

\vspace{0.5em}

\textit{Case 3: General Tableau Ensemble.}
When the tableau representation contains non-product stabilizer structures and does not satisfy either stabilizer-reducibility (Case 1) or product-ensemble factorization (Case 2), we resort to explicit state reconstruction as a general fallback procedure.
In this regime, the full quantum state is reconstructed in the computational basis as
\begin{equation}
|\Psi\rangle = \sum_{y \in \{0,1\}^n} A(y)\, |y\rangle,
\end{equation}
where the amplitudes $A(y)$ are obtained from the tableau-chain propagation procedure, which aggregates contributions from all tableau components, including their associated weights and global phase corrections as described in the previous sections.

Given the reconstructed state, the reduced density matrix on subsystem $A$ is obtained via partial trace over subsystem $B$:
\begin{equation}
\rho_A = \mathrm{Tr}_B(|\Psi\rangle \langle \Psi|).
\end{equation}
Entanglement can then be evaluated either through the eigenvalue spectrum of $\rho_A$ or equivalently via Schmidt decomposition. As a computationally efficient scalar proxy, the purity $\mathrm{Tr}(\rho_A^2)$ is used when full spectral information is not required.
Although this procedure has exponential cost in the worst case due to full state reconstruction, it provides a general and consistent entanglement characterization when neither stabilizer-based nor product-ensemble simplifications are applicable.

\vspace{0.5em}

This three-level entanglement characterization framework enables systematic analysis across different regimes of the Tableau Chain representation, including stabilizer-dominated, product-ensemble, and general reconstruction settings. By exploiting stabilizer structure whenever available and resorting to full state reconstruction only when necessary, the proposed approach achieves a principled trade-off between computational efficiency and representational generality, while remaining consistent with amplitude-level quantum semantics induced by the tableau-chain propagation model.

\section{Experimental Evaluation}
\label{experiment}
To evaluate the effectiveness of our proposed framework, we implement StabQ in Python based on the Qiskit Library~\cite{qiskit}. All experiments were run on a MacBook Pro with an Apple M1 Pro chip and 16 GB unified memory. In the experiments, we intend to answer the following questions:

\begin{itemize}

\item \textbf{RQ1 (Model Correctness)}: How correctly does StabQ construct Tableau Chains for quantum programs?

\item \textbf{RQ2 (Program Analysis)}:
How effectively does StabQ support quantum program analysis tasks over Tableau Chains?

\item \textbf{RQ3 (Scalability)}: How does StabQ scale with respect to circuit characteristics such as qubit number, gate number, and Clifford rate?

\end{itemize}

\begin{table}[htbp]
\small
\centering
\caption{Benchmark circuits used in the evaluation of StabQ.}
\label{tab:benchmarks}
\begin{tabular}{l|p{6cm}|c|c}
\hline
 & Benchmark Circuits & Qubits Number & Gates Number \\ \hline
Algorithms & Amplitude Estimation, Deutsch–Jozsa, Grover, QAOA, QFT, QPEExact, QPEInexact, QWalk, GHZ, VQE& 2 -- 14 & 2 -- 12327 \\ \hline
MQT Bench &  GraphState, PortfolioQAOA, PortfolioVQE, QNN, QFTEntangled, RealAmpRandom, Su2-Random, TwoLocalRandom, WState & 2 -- 14 & 2 -- 532  \\ \hline
QASMBench & Adder, Bell, Fredkin, Toffoli, Teleportation, Simon, SAT, QEC, HHL, Qrng, LPN, PEA ... & 2 -- 14 & 4 -- 689 \\ \hline
\end{tabular}
\end{table}
In \textbf{RQ1}, we evaluate the correctness of the Tableau Chain constructed by StabQ. Specifically, we assess whether the proposed symbolic execution model can faithfully capture quantum program semantics by validating that both quantum state reconstruction and entanglement analysis derived from the Tableau Chain are consistent with exact quantum-state simulations.
In \textbf{RQ2}, we investigate the effectiveness of StabQ in performing downstream analysis based on the constructed Tableau Chains. \textcolor{black}{Specifically, we evaluate the runtime overhead of two representative categories of downstream analysis tasks: quantum state reconstruction and entanglement analysis. Since support-set computation can be directly derived from reconstructed quantum states, and entanglement analysis relies on purity computation over reduced states, these two categories capture the essential analysis capabilities provided by StabQ. Therefore, we report their performance across benchmark circuits with varying characteristics.}
In \textbf{RQ3}, we examine the scalability of StabQ under varying circuit characteristics. Specifically, we investigate how factors such as the number of qubits, circuit size (in terms of gate count), and Clifford rate affect the runtime and memory consumption of the framework.

For correctness evaluation, we use \textit{qiskit.quantum\_info.Statevector} as the ground truth. 
Existing stabilizer-based simulators are not suitable as correctness baselines because they are primarily designed for Clifford circuits and cannot directly handle general quantum programs containing non-Clifford operations. 
Since StabQ aims to construct an exact symbolic representation of quantum program evolution rather than an approximate simulation model, we compare the quantum states reconstructed from Tableau Chains with the exact statevectors generated by Qiskit's statevector simulator. 
This comparison directly evaluates whether StabQ faithfully preserves the semantic evolution of quantum programs.
For benchmarks, we evaluate StabQ using three benchmark suites: \textit{Algorithms}, \textit{MQT Bench}~\cite{mqtbench}, and \textit{QASMBench}~\cite{QASMBench}, as summarized in ~\autoref{tab:benchmarks}. All evaluated circuits contain fewer than 15 qubits to ensure feasibility of exact statevector-based simulation.

\subsection{Model Correctness}
\label{Model Correctness}
\begin{table}[!htbp]
\centering
\caption{Experimental results of model construction and correctness evaluation on representative benchmark quantum programs. Each benchmark family is treated as a single experimental group. We report the ranges of qubit number, gate number, Clifford rate, model construction time (seconds), peak memory usage (MB), and the maximum number of generated tableaux for all circuits whose Tableau Chains are successfully constructed within 2000 seconds.}
\label{table:model}
\begin{tabular}{c|ccccccc}
\multirow{2}{*}{\begin{tabular}[c]{@{}c@{}}Quantum\\ Circuits\end{tabular}} & \multirow{2}{*}{\begin{tabular}[c]{@{}c@{}}Qubit \& Gate\\ Number\end{tabular}} & \multirow{2}{*}{\begin{tabular}[c]{@{}c@{}}Clifford\\ Rate\end{tabular}} & \multirow{2}{*}{\begin{tabular}[c]{@{}c@{}}Build\\ Time\end{tabular}} & \multirow{2}{*}{\begin{tabular}[c]{@{}c@{}}Peak\\ Memory\end{tabular}} & \multirow{2}{*}{\begin{tabular}[c]{@{}c@{}}Tableau\\ Number\end{tabular}} & \multicolumn{2}{c}{Analysis Results} \\ \cline{7-8} 
                          &                                                                                  &                                                                          &                                                                       &                                                                        &                                                                           & State         & Entanglement         \\ \hline
AE                        & \begin{tabular}[c]{@{}c@{}}2 -- 9\\ 7 -- 77\end{tabular}                          & \begin{tabular}[c]{@{}c@{}}0.4156\\ 0.5714\end{tabular}                  & \begin{tabular}[c]{@{}c@{}}0.011\\ 967.8\end{tabular}                 & \begin{tabular}[c]{@{}c@{}}67.86\\ 123.7\end{tabular}                  & \begin{tabular}[c]{@{}c@{}}4\\ 512\end{tabular}                           &     $\checkmark$          &      $\checkmark$                \\ \hline
DJ                        & \begin{tabular}[c]{@{}c@{}}2 -- 14\\ 4 -- 40\end{tabular}                        & \begin{tabular}[c]{@{}c@{}}1.0000\\ 1.0000\end{tabular}                  & \begin{tabular}[c]{@{}c@{}}0.001\\ 46.15\end{tabular}                 & \begin{tabular}[c]{@{}c@{}}68.89\\ 81.30\end{tabular}                  & \begin{tabular}[c]{@{}c@{}}1\\ 1\end{tabular}                             &             $\checkmark$  &   $\checkmark$                   \\ \hline
Grover                    & \begin{tabular}[c]{@{}c@{}}2 -- 9\\ 2 -- 1195\end{tabular}                       & \begin{tabular}[c]{@{}c@{}}0.5178\\ 1.0000\end{tabular}                  & \begin{tabular}[c]{@{}c@{}}0.003\\ 1468\end{tabular}                  & \begin{tabular}[c]{@{}c@{}}68.90\\ 170.6\end{tabular}                  & \begin{tabular}[c]{@{}c@{}}1\\ 64\end{tabular}                            &             $\checkmark$  &    $\checkmark$                  \\ \hline
QAOA                      & \begin{tabular}[c]{@{}c@{}}3 -- 9\\ 15 -- 45\end{tabular}                        & \begin{tabular}[c]{@{}c@{}}0.2000\\ 0.2000\end{tabular}                  & \begin{tabular}[c]{@{}c@{}}0.070\\ 1253\end{tabular}                  & \begin{tabular}[c]{@{}c@{}}68.09\\ 85.50\end{tabular}                  & \begin{tabular}[c]{@{}c@{}}4\\ 256\end{tabular}                           &             $\checkmark$  &   $\checkmark$                   \\ \hline
QFT                       & \begin{tabular}[c]{@{}c@{}}2 -- 14\\ 4 -- 112\end{tabular}                       & \begin{tabular}[c]{@{}c@{}}0.1875\\ 0.7500\end{tabular}                  & \begin{tabular}[c]{@{}c@{}}0.010\\ 842.6\end{tabular}                 & \begin{tabular}[c]{@{}c@{}}68.75\\ 88.31\end{tabular}                  & \begin{tabular}[c]{@{}c@{}}1\\ 1\end{tabular}                             &             $\checkmark$  &    $\checkmark$                  \\ \hline
QPEExact                                                                    & \begin{tabular}[c]{@{}c@{}}2 -- 10\\ 4 -- 66\end{tabular}                         & \begin{tabular}[c]{@{}c@{}}0.3636\\ 1.0000\end{tabular}                  & \begin{tabular}[c]{@{}c@{}}0.005\\ 117.1\end{tabular}                 & \begin{tabular}[c]{@{}c@{}}69.28\\ 88.31\end{tabular}                  & \begin{tabular}[c]{@{}c@{}}1\\ 64\end{tabular}                            &          $\checkmark$     &   $\checkmark$                   \\ \hline
QPEInexact                                                                  & \begin{tabular}[c]{@{}c@{}}2 -- 9\\ 57\end{tabular}                               & \begin{tabular}[c]{@{}c@{}}0.3684\\ 0.7500\end{tabular}                  & \begin{tabular}[c]{@{}c@{}}0.006\\ 567.8\end{tabular}                 & \begin{tabular}[c]{@{}c@{}}68.86\\ 97.98\end{tabular}                  & \begin{tabular}[c]{@{}c@{}}2\\ 256\end{tabular}                           &        $\checkmark$       & $\checkmark$                     \\ \hline
QWalk                     & \begin{tabular}[c]{@{}c@{}}3 -- 13\\ 25 -- 12327\end{tabular}                    & \begin{tabular}[c]{@{}c@{}}0.1849\\ 0.7600\end{tabular}                  & \begin{tabular}[c]{@{}c@{}}0.109\\ 1212\end{tabular}                  & \begin{tabular}[c]{@{}c@{}}69.38\\ 258.7\end{tabular}                  & \begin{tabular}[c]{@{}c@{}}4\\ 96\end{tabular}                            &              $\checkmark$ &    $\checkmark$                  \\ \hline
GHZ                       & \begin{tabular}[c]{@{}c@{}}2 -- 14\\ 2 -- 14\end{tabular}                        & \begin{tabular}[c]{@{}c@{}}1.0000\\ 1.0000\end{tabular}                    & \begin{tabular}[c]{@{}c@{}}0.006\\ 0.030\end{tabular}                 & \begin{tabular}[c]{@{}c@{}}68.09\\ 69.34\end{tabular}                  & \begin{tabular}[c]{@{}c@{}}1\\ 1\end{tabular}                             &             $\checkmark$  &    $\checkmark$                  \\ \hline
VQE                                                                         & \begin{tabular}[c]{@{}c@{}}3 -- 12\\ 13 -- 58\end{tabular}                        & \begin{tabular}[c]{@{}c@{}}0.3333\\ 0.4737\end{tabular}                  & \begin{tabular}[c]{@{}c@{}}0.040\\ 281.5\end{tabular}                 & \begin{tabular}[c]{@{}c@{}}69.27\\ 437.6\end{tabular}                  & \begin{tabular}[c]{@{}c@{}}8\\ 4096\end{tabular}                          &        $\checkmark$       &        $\checkmark$              \\ \hline
GraphState                & \begin{tabular}[c]{@{}c@{}}3 -- 14\\ 6 -- 28\end{tabular}                        & \begin{tabular}[c]{@{}c@{}}1.0000\\ 1.0000\end{tabular}                    & \begin{tabular}[c]{@{}c@{}}0.014\\ 112.4\end{tabular}                 & \begin{tabular}[c]{@{}c@{}}69.38\\ 85.50\end{tabular}                  & \begin{tabular}[c]{@{}c@{}}1\\ 1\end{tabular}                             &              $\checkmark$ &  $\checkmark$                    \\ \hline
PQAOA                     & \begin{tabular}[c]{@{}c@{}}3 -- 12\\ 21 -- 246\end{tabular}                      & \begin{tabular}[c]{@{}c@{}}0.0000\\ 0.0000\end{tabular}                    & \begin{tabular}[c]{@{}c@{}}0.134\\ 1953\end{tabular}                  & \begin{tabular}[c]{@{}c@{}}85.50\\ 1398\end{tabular}                   & \begin{tabular}[c]{@{}c@{}}8\\ 4096\end{tabular}                          &             $\checkmark$  &   $\checkmark$                   \\ \hline
PVQE                      & \begin{tabular}[c]{@{}c@{}}3 -- 11\\ 21 -- 209\end{tabular}                      & \begin{tabular}[c]{@{}c@{}}0.4286\\ 0.7895\end{tabular}                  & \begin{tabular}[c]{@{}c@{}}0.211\\ 1457\end{tabular}                  & \begin{tabular}[c]{@{}c@{}}70.19\\ 624.1\end{tabular}                  & \begin{tabular}[c]{@{}c@{}}8\\ 2048\end{tabular}                          &             $\checkmark$  &  $\checkmark$                    \\ \hline
QNN                       & \begin{tabular}[c]{@{}c@{}}2 -- 12\\ 15 -- 455\end{tabular}                      & \begin{tabular}[c]{@{}c@{}}0.3333\\ 0.6044\end{tabular}                  & \begin{tabular}[c]{@{}c@{}}0.037\\ 1991\end{tabular}                  & \begin{tabular}[c]{@{}c@{}}69.05\\ 2148\end{tabular}                   & \begin{tabular}[c]{@{}c@{}}4\\ 4096\end{tabular}                          &             $\checkmark$  &   $\checkmark$                   \\ \hline
QFTEntangled                                                                & \begin{tabular}[c]{@{}c@{}}2 -- 9\\ 6 -- 1008\end{tabular}                        & \begin{tabular}[c]{@{}c@{}}0.3793\\ 0.8333\end{tabular}                  & \begin{tabular}[c]{@{}c@{}}0.012\\ 1667\end{tabular}                  & \begin{tabular}[c]{@{}c@{}}67.86\\ 83.50\end{tabular}                  & \begin{tabular}[c]{@{}c@{}}4\\ 512\end{tabular}                           &     $\checkmark$            &             $\checkmark$           \\ \hline
RAR                       & \begin{tabular}[c]{@{}c@{}}2 -- 12\\ 11 -- 246\end{tabular}                      & \begin{tabular}[c]{@{}c@{}}0.2727\\ 0.8049\end{tabular}                  & \begin{tabular}[c]{@{}c@{}}0.020\\ 1069\end{tabular}                  & \begin{tabular}[c]{@{}c@{}}69.03\\ 1488\end{tabular}                   & \begin{tabular}[c]{@{}c@{}}4\\ 4096\end{tabular}                          &              $\checkmark$ &  $\checkmark$                    \\ \hline
S2R                                                                         & \begin{tabular}[c]{@{}c@{}}2 -- 12\\ 11 -- 246\end{tabular}                       & \begin{tabular}[c]{@{}c@{}}0.2727\\ 0.8049\end{tabular}                  & \begin{tabular}[c]{@{}c@{}}0.040\\ 1415\end{tabular}                  & \begin{tabular}[c]{@{}c@{}}67.95\\ 1032\end{tabular}                   & \begin{tabular}[c]{@{}c@{}}4\\ 4096\end{tabular}                          &              $\checkmark$ &           $\checkmark$           \\ \hline
TLR                                                                         & \begin{tabular}[c]{@{}c@{}}2 -- 12\\ 11 -- 246\end{tabular}                       & \begin{tabular}[c]{@{}c@{}}0.2727\\ 0.8049\end{tabular}                  & \begin{tabular}[c]{@{}c@{}}0.036\\ 1061\end{tabular}                  & \begin{tabular}[c]{@{}c@{}}69.31\\ 716.4\end{tabular}                  & \begin{tabular}[c]{@{}c@{}}4\\ 4096\end{tabular}                          &              $\checkmark$ &         $\checkmark$             \\ \hline
WState                    & \begin{tabular}[c]{@{}c@{}}2 -- 14\\ 5 -- 53\end{tabular}                        & \begin{tabular}[c]{@{}c@{}}0.5094\\ 0.6000\end{tabular}                  & \begin{tabular}[c]{@{}c@{}}0.012\\ 321.4\end{tabular}                 & \begin{tabular}[c]{@{}c@{}}69.41\\ 365.8\end{tabular}                  & \begin{tabular}[c]{@{}c@{}}2\\ 8192\end{tabular}                          &             $\checkmark$  &   $\checkmark$                   \\ \hline
QASMBench                                                                   & \begin{tabular}[c]{@{}c@{}}2 -- 13\\ 4 -- 689\end{tabular}                        & \begin{tabular}[c]{@{}c@{}}0.3333\\ 1.0000\end{tabular}                  & \begin{tabular}[c]{@{}c@{}}0.013\\ 430.6\end{tabular}                 & \begin{tabular}[c]{@{}c@{}}68.64\\ 106.4\end{tabular}                  & \begin{tabular}[c]{@{}c@{}}2\\ 128\end{tabular}                           &     $\checkmark$          &                     $\checkmark$         
\end{tabular}
\end{table}

We conducted experiments on benchmark quantum programs. We set a maximum model building time of 2000 seconds, and the results are presented in ~\autoref{table:model}.

The experimental results demonstrate that the Tableau Chains constructed by StabQ correctly support the evaluated downstream analysis tasks, including quantum state reconstruction and entanglement analysis. Specifically, we reconstruct quantum statevectors from the Tableau Chains and compare them with the statevectors generated by an exact statevector simulator. Across all evaluated benchmark circuits, the reconstructed states are identical to the simulation results, indicating that StabQ faithfully preserves the semantic evolution of quantum programs during symbolic execution. Furthermore, for entanglement analysis, we validate the purity values computed from the Tableau Chain representation by comparing them with the results obtained through Schmidt decomposition over the complete density matrix. The results show that the purity values derived from StabQ are fully consistent with those obtained from the density-matrix-based approach, further demonstrating that the symbolic representation accurately preserves the entanglement information of quantum states. Compared with direct quantum execution or statevector simulation, constructing a Tableau Chain introduces additional computational overhead because StabQ explicitly maintains symbolic states and execution information at each program step. However, unlike conventional simulation approaches that only provide the final execution result, the constructed Tableau Chain captures the complete evolution history of quantum states and enables multiple downstream analyses over arbitrary intermediate program states after a single model construction process. Therefore, the construction cost represents a trade-off for obtaining a reusable symbolic execution representation that supports comprehensive quantum program analysis. To further evaluate the compactness of the proposed representation, we record the number of tableaux contained in each Tableau Node throughout the entire Tableau Chain and report the maximum value observed for each benchmark family. Although some benchmark families, such as QPEExact, Grover, and QWalk, exhibit relatively low Clifford rates, the number of tableaux remains bounded within practical limits across the evaluated benchmarks and does not exhibit exponential growth in these cases. This suggests that the tableau consolidation strategy effectively contributes to controlling the growth of symbolic representations by merging equivalent tableaux during symbolic execution.

Overall, the results show that StabQ constructs a semantically consistent symbolic execution model while maintaining a compact Tableau Chain in practice. Although circuits with intensive non-Clifford operations may still introduce additional symbolic growth, the proposed representation provides a practical foundation for systematic quantum program analysis by enabling reusable intermediate-state representations.

\begin{tcolorbox}[size=title,rightrule=1mm, leftrule=1mm, toprule=0mm, bottomrule=0mm, arc=0pt,colback=gray!5,colframe=bleudefrance!75!black,breakable]
\textbf{Answer to RQ1:}  
StabQ correctly constructs a semantically consistent Tableau Chain that faithfully captures the execution evolution of quantum programs. The quantum states reconstructed from the generated Tableau Chains exactly match those obtained from an exact statevector simulator across all evaluated benchmark circuits. Furthermore, the purity values computed from the Tableau Chain representation are identical to those obtained through Schmidt decomposition over the complete density matrix, demonstrating that StabQ preserves both state semantics and entanglement information. In addition, the number of tableaux maintained in the Tableau Chain remains bounded in practice, even for benchmark families with relatively low Clifford rates, suggesting that tableau consolidation effectively mitigates symbolic representation growth. These results demonstrate that StabQ constructs reliable and reusable symbolic execution models for downstream quantum program analysis.
\end{tcolorbox}

\subsection{Program Analysis}
\label{Analysis Efficiency}
We evaluate the capability of StabQ in supporting quantum program analysis tasks based on the constructed Tableau Chains using benchmark quantum programs. The experimental results are summarized in ~\autoref{tab:analysis}.

For the quantum state reconstruction task, StabQ performs analysis directly on the symbolic representation maintained by the Tableau Chain. Specifically, support analysis, as an essential step in the state recovery process, employs a method based on $Zy=r \pmod{2}$ to determine the computational-basis support of quantum states, and combines it with relative phase reconstruction to recover complete quantum state information. This process enables the analysis module to directly access intermediate quantum states recorded in the Tableau Chain without requiring re-execution of the original quantum program.
For the entanglement analysis task, StabQ infers the entanglement properties of intermediate quantum states based on the symbolic states stored in the Tableau Chain. Specifically, we compute the purity of reduced states and compare the results with those obtained through Schmidt decomposition over the corresponding full density matrices. The experimental results show that the purity values computed by StabQ are identical to those obtained from Schmidt decomposition, demonstrating that the Tableau Chain preserves the semantic information required for accurate entanglement inference. For Cases 1 and 3 (Section~\ref{entangle}), StabQ directly performs entanglement analysis on the evaluated states. For Case 2, although the corresponding circuits exhibit product-state structures, the current analysis procedure does not explicitly exploit such structural information, resulting in additional symbolic processing overhead during tableau manipulation.

Overall, the experimental results demonstrate that the Tableau Chain serves not only as a symbolic representation of quantum program execution but also as a unified intermediate representation supporting multiple analysis tasks. Compared with the one-time construction process of the Tableau Chain, subsequent analysis tasks incur significantly lower overhead, highlighting the reusability of the proposed representation in quantum program analysis scenarios. Unlike approaches that require re-executing programs or reconstructing states separately for different analysis tasks, StabQ enables reusable analysis over arbitrary execution steps through a single Tableau Chain construction process.

\begin{table}[!htbp]
\centering
\caption{Experimental results of quantum state analysis and entanglement analysis on the benchmark quantum programs. We perform both analyses on all circuits in Table~\ref{tab:benchmarks}, with a maximum analysis time of 500 seconds. For each benchmark family, we report the range of analysis time (seconds). For entanglement analysis, we report the entanglement analysis case (Section~\ref{entangle}) and the range of tableau number at the analyzed program step.}
\label{tab:analysis}
\begin{tabular}{c|cc|cccc}
\multirow{2}{*}{\begin{tabular}[c]{@{}c@{}}Quantum\\ Circuits\end{tabular}} & \multicolumn{2}{c|}{State}                                                                                        & \multicolumn{4}{c}{Entanglement}                                                                                                                                                   \\ \cline{2-7} 
                                                                            & \begin{tabular}[c]{@{}c@{}}Qubit \\ Number\end{tabular} & Time                                                    & \begin{tabular}[c]{@{}c@{}}Qubit\\ Number\end{tabular} & Case & \begin{tabular}[c]{@{}c@{}}Tableau\\ Number\end{tabular} & Time                                                    \\ \hline
AE                                                                          & 2 -- 9                                                  & \begin{tabular}[c]{@{}c@{}}0.0006\\ 0.1314\end{tabular} & 2 -- 7                                                 & 2    & \begin{tabular}[c]{@{}c@{}}4\\ 256\end{tabular}          & \begin{tabular}[c]{@{}c@{}}0.0012\\ 218.77\end{tabular} \\ \hline
DJ                                                                          & 2 -- 14                                                 & \begin{tabular}[c]{@{}c@{}}0.0003\\ 0.0032\end{tabular} & 2 -- 14                                                & 1    & 1                                                        & \begin{tabular}[c]{@{}c@{}}0.0001\\ 0.0001\end{tabular} \\ \hline
Grover                                                                      & 2 -- 9                                                  & \begin{tabular}[c]{@{}c@{}}0.0003\\ 0.4502\end{tabular} & 2 -- 9                                                 & 1,2  & \begin{tabular}[c]{@{}c@{}}1\\ 64\end{tabular}           & \begin{tabular}[c]{@{}c@{}}0.0001\\ 19.012\end{tabular} \\ \hline
QAOA                                                                        & 3 -- 9                                                  & \begin{tabular}[c]{@{}c@{}}0.0016\\ 14.584\end{tabular} & 2 -- 8                                                 & 2    & \begin{tabular}[c]{@{}c@{}}4\\ 128\end{tabular}          & \begin{tabular}[c]{@{}c@{}}0.0023\\ 212.30\end{tabular} \\ \hline
QFT                                                                         & 2 -- 14                                                 & \begin{tabular}[c]{@{}c@{}}0.0004\\ 3.0147\end{tabular} & 2 -- 14                                                & 1    & 1                                                        & \begin{tabular}[c]{@{}c@{}}0.0001\\ 0.0001\end{tabular} \\ \hline
QPEExact                                                                    & 2 -- 10                                                 & \begin{tabular}[c]{@{}c@{}}0.0003\\ 0.0008\end{tabular} & 2 -- 10                                                & 1    & 1                                                        & \begin{tabular}[c]{@{}c@{}}0.0001\\ 0.0001\end{tabular} \\ \hline
QPEInexact                                                                  & 2 -- 9                                                  & \begin{tabular}[c]{@{}c@{}}0.0004\\ 0.0673\end{tabular} & 2 -- 8                                                 & 2    & \begin{tabular}[c]{@{}c@{}}2\\ 128\end{tabular}          & \begin{tabular}[c]{@{}c@{}}0.0003\\ 212.99\end{tabular} \\ \hline
QWalk                                                                       & 3 -- 13                                                 & \begin{tabular}[c]{@{}c@{}}0.0004\\ 0.0059\end{tabular} & 3 -- 13                                                & 1,2  & \begin{tabular}[c]{@{}c@{}}1\\ 8\end{tabular}            & \begin{tabular}[c]{@{}c@{}}0.0001\\ 0.2566\end{tabular} \\ \hline
GHZ                                                                         & 2 -- 14                                                 & \begin{tabular}[c]{@{}c@{}}0.0004\\ 0.0031\end{tabular} & 2 -- 14                                                & 1    & 1                                                        & \begin{tabular}[c]{@{}c@{}}0.0001\\ 0.0001\end{tabular} \\ \hline
VQE                                                                         & 2 -- 12                                                 & \begin{tabular}[c]{@{}c@{}}0.0012\\ 1.2138\end{tabular} & 2 -- 12                                                & 3    & \begin{tabular}[c]{@{}c@{}}8\\ 3414\end{tabular}         & \begin{tabular}[c]{@{}c@{}}0.0014\\ 1.3932\end{tabular} \\ \hline
Graphstate                                                                  & 3 -- 14                                                 & \begin{tabular}[c]{@{}c@{}}0.0006\\ 3.0851\end{tabular} & 3 -- 14                                                & 1    & 1                                                        & \begin{tabular}[c]{@{}c@{}}0.0001\\ 0.0001\end{tabular} \\ \hline
PQAOA                                                                       & 3 -- 12                                                 & \begin{tabular}[c]{@{}c@{}}0.0011\\ 1.4299\end{tabular} & 3 -- 7                                                 & 2    & \begin{tabular}[c]{@{}c@{}}8\\ 128\end{tabular}          & \begin{tabular}[c]{@{}c@{}}0.0106\\ 211.84\end{tabular} \\ \hline
PVQE                                                                        & 3 -- 11                                                 & \begin{tabular}[c]{@{}c@{}}0.0011\\ 1.0739\end{tabular} & 3 -- 7                                                 & 2    & \begin{tabular}[c]{@{}c@{}}8\\ 128\end{tabular}          & \begin{tabular}[c]{@{}c@{}}0.0104\\ 208.49\end{tabular} \\ \hline
QNN                                                                         & 2 -- 12                                                 & \begin{tabular}[c]{@{}c@{}}0.0006\\ 1.4682\end{tabular} & 2 -- 12                                                & 3    & \begin{tabular}[c]{@{}c@{}}4\\ 4096\end{tabular}         & \begin{tabular}[c]{@{}c@{}}0.0007\\ 1.6140\end{tabular} \\ \hline
QFTEntangled                                                                & 2 -- 9                                                  & \begin{tabular}[c]{@{}c@{}}0.0007\\ 14.668\end{tabular} & 2 -- 9                                                 & 3    & \begin{tabular}[c]{@{}c@{}}4\\ 512\end{tabular}          & \begin{tabular}[c]{@{}c@{}}0.0008\\ 14.822\end{tabular} \\ \hline
RAR                                                                         & 2 -- 12                                                 & \begin{tabular}[c]{@{}c@{}}0.0006\\ 1.4871\end{tabular} & 2 -- 12                                                & 3    & \begin{tabular}[c]{@{}c@{}}4\\ 4096\end{tabular}         & \begin{tabular}[c]{@{}c@{}}0.0007\\ 1.6308\end{tabular} \\ \hline
S2R                                                                         & 2 -- 12                                                 & \begin{tabular}[c]{@{}c@{}}0.0006\\ 1.4830\end{tabular} & 2 -- 12                                                & 3    & \begin{tabular}[c]{@{}c@{}}4\\ 4096\end{tabular}         & \begin{tabular}[c]{@{}c@{}}0.0007\\ 1.6196\end{tabular} \\ \hline
TLR                                                                         & 2 -- 12                                                 & \begin{tabular}[c]{@{}c@{}}0.0006\\ 1.4828\end{tabular} & 2 -- 12                                                & 3    & \begin{tabular}[c]{@{}c@{}}4\\ 4096\end{tabular}         & \begin{tabular}[c]{@{}c@{}}0.0007\\ 1.6085\end{tabular} \\ \hline
WState                                                                      & 2 -- 14                                                 & \begin{tabular}[c]{@{}c@{}}0.0004\\ 0.0087\end{tabular} & 2 -- 14                                                & 3    & \begin{tabular}[c]{@{}c@{}}2\\ 14\end{tabular}           & \begin{tabular}[c]{@{}c@{}}0.0008\\ 0.0098\end{tabular} \\ \hline
QASMBench                                                                   & 2 -- 13                                                 & \begin{tabular}[c]{@{}c@{}}0.0004\\ 0.0942\end{tabular} & 2 -- 13                                                & 1,2,3 & \begin{tabular}[c]{@{}c@{}}1\\ 128\end{tabular}          & \begin{tabular}[c]{@{}c@{}}0.0001\\ 205.57\end{tabular}
                              
\end{tabular}
\end{table}

\begin{tcolorbox}[size=title,rightrule=1mm, leftrule=1mm, toprule=0mm, bottomrule=0mm, arc=0pt,colback=gray!5,colframe=bleudefrance!75!black,breakable]
\textbf{Answer to RQ2:}  
StabQ effectively supports downstream quantum program analysis tasks through the unified Tableau Chain representation. Once the Tableau Chain is constructed, StabQ enables quantum state reconstruction and entanglement analysis over arbitrary execution steps without requiring repeated program execution or intermediate-state reconstruction. The state reconstruction process leverages $Zy=r~(\bmod~2)$ support analysis and relative phase reconstruction to accurately recover state information from symbolic representations. Moreover, the purity values obtained by StabQ for entanglement analysis are consistent with those computed from Schmidt decomposition over full density matrices, demonstrating that the Tableau Chain preserves the semantic information required for entanglement inference. These results demonstrate that Tableau Chain serves as a reusable symbolic intermediate representation for supporting systematic quantum program analysis.
\end{tcolorbox}

\subsection{Scalability with Respect to Circuit Characteristics}
\label{scalability}
Considering that different benchmark circuit families impose different upper bounds on the number of qubits for which a Tableau Chain model can be successfully constructed, we restrict our analysis to the common qubit range of 3 to 9, which is supported across all benchmarks. To reduce variance and enable a consistent cross-benchmark comparison, we aggregate all experimental results within this range. Furthermore, both gate count and Clifford rate are discretized into seven equal-width bins to facilitate a uniform analysis of their impact on construction overhead. The experimental results are summarized in ~\autoref{fig:3x2}.

The results demonstrate that StabQ exhibits good scalability in Tableau Chain model construction across all evaluated benchmarks. Both construction time and peak memory consumption increase with circuit size, while remaining within practical limits across the tested configurations.

As shown in Figures~\ref{fig:qubit-time} and ~\ref{fig:gate-time}, construction time increases steadily with both the number of qubits and the number of gates. This trend reflects the growing cost of tableau propagation and symbolic state updates during model construction. In contrast, Figures~\ref{fig:qubit-memory} and ~\ref{fig:gate-memory} indicate that peak memory consumption grows more moderately. Although larger circuits introduce greater variance and occasional high-memory outliers, most benchmarks remain within a relatively stable memory range, suggesting that the Tableau Chain representation maintains favorable memory efficiency.

The effect of the Clifford rate is less straightforward. Figures~\ref{fig:rate-time} and ~\ref{fig:rate-memory} show that neither construction time nor peak memory consumption exhibits a monotonic dependence on the Clifford rate. Instead, the overhead varies significantly across different Clifford-rate intervals, indicating that the performance of StabQ is influenced more by the overall symbolic structure of the circuit than by the proportion of Clifford gates alone.
The observed non-monotonic behavior suggests that the proposed symbolic representation and tableau consolidation strategy reduce the sensitivity of construction overhead to the proportion of non-Clifford operations. In some cases, circuits with higher Clifford rates even require less construction time, further supporting the idea that symbolic handling of non-Clifford gates, together with tableau consolidation, effectively mitigates their computational overhead during model construction.

Overall, the results suggest that StabQ’s construction cost is largely driven by circuit size, in terms of qubit and gate counts, while the influence of Clifford rate primarily reflects higher-order structural properties of the circuit rather than a direct linear effect.

\begin{figure}[!htbp]
    \centering

    \begin{subfigure}{0.48\textwidth}
        \centering
        \includegraphics[width=\textwidth]{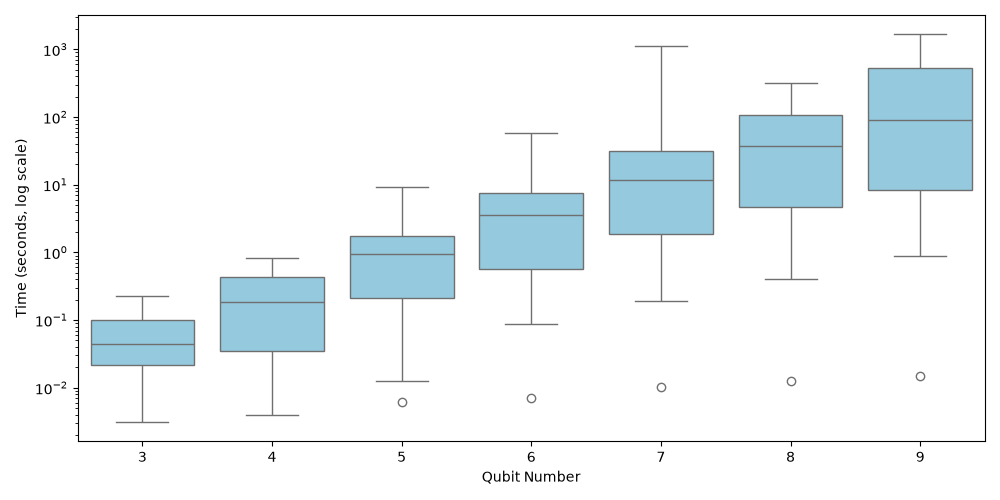}
        \caption{Distribution of the model building time of StabQ on benchmark circuits grouped by qubit number.}
        \label{fig:qubit-time}
    \end{subfigure}
    \hfill
    \begin{subfigure}{0.48\textwidth}
        \centering
        \includegraphics[width=\textwidth]{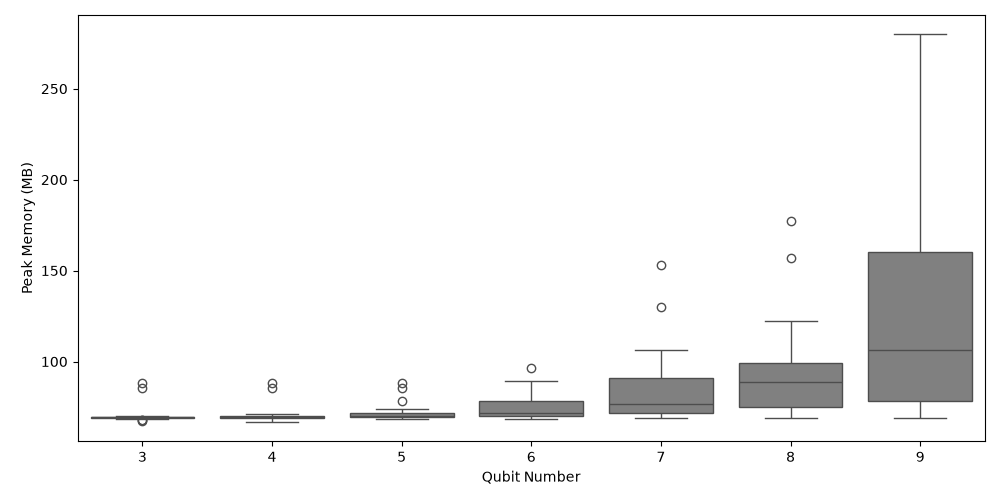}
        \caption{Distribution of the peak memory of StabQ on benchmark circuits grouped by qubit number.}
        \label{fig:qubit-memory}
    \end{subfigure}

    \vspace{0.5em}

    \begin{subfigure}{0.48\textwidth}
        \centering
        \includegraphics[width=\textwidth]{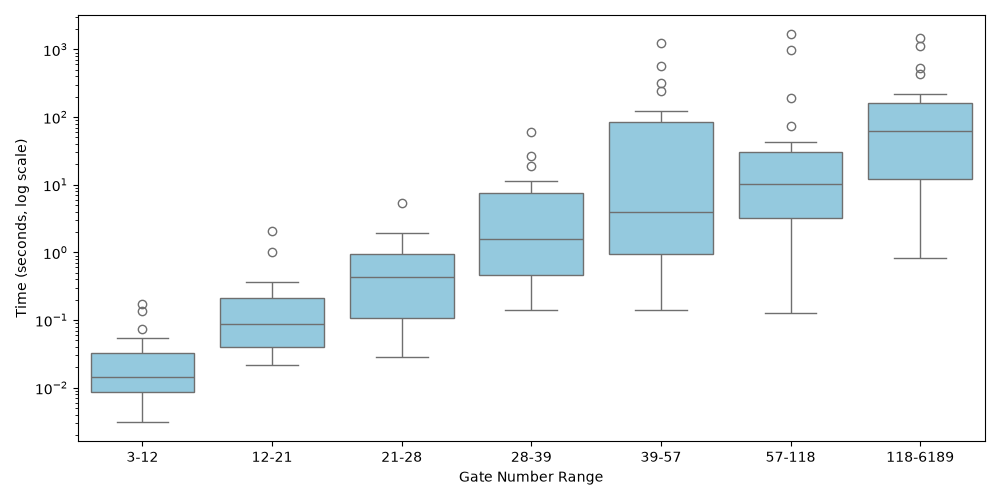}
        \caption{Distribution of the model building time of StabQ on benchmark circuits grouped by gate number.}
        \label{fig:gate-time}
    \end{subfigure}
    \hfill
    \begin{subfigure}{0.48\textwidth}
        \centering
        \includegraphics[width=\textwidth]{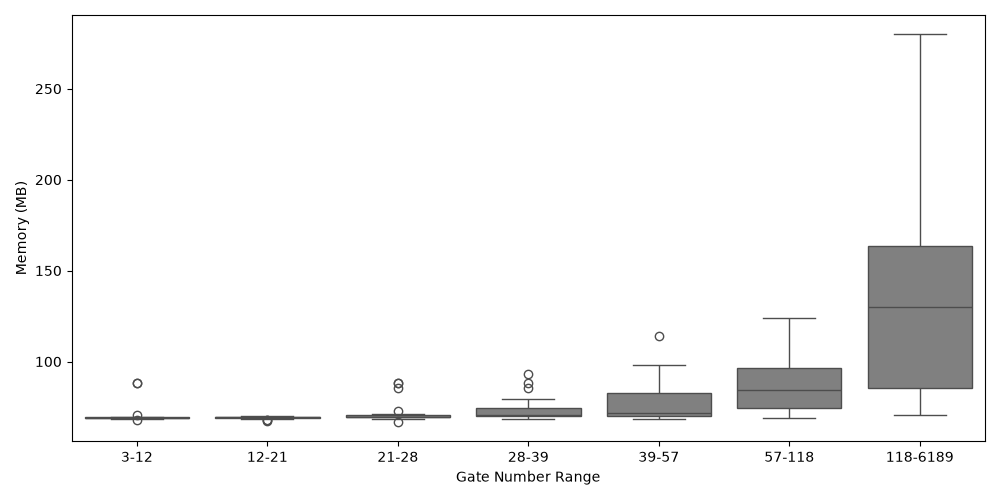}
        \caption{Distribution of the peak memory of StabQ on benchmark circuits grouped by gate number.}
        \label{fig:gate-memory}
    \end{subfigure}

    \vspace{0.5em}

    \begin{subfigure}{0.48\textwidth}
        \centering
        \includegraphics[width=\textwidth]{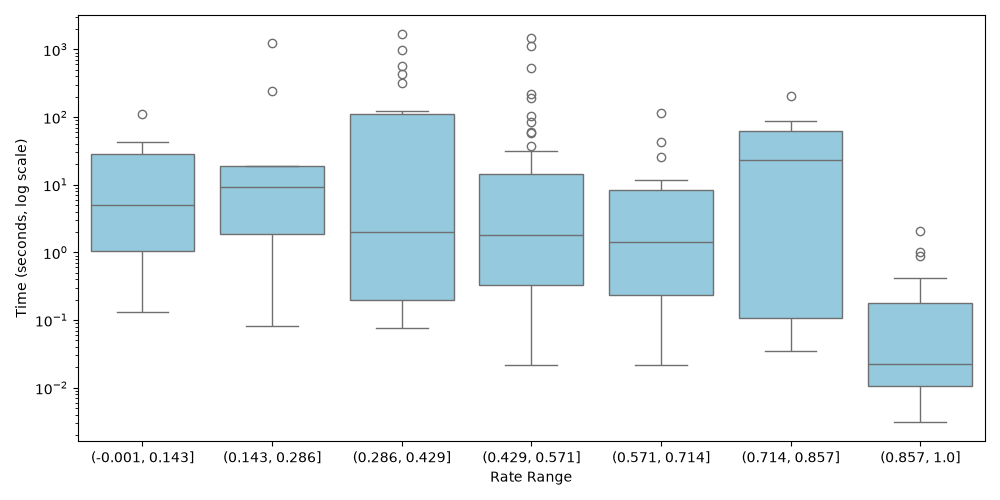}
        \caption{Distribution of the model building time of StabQ on benchmark circuits grouped by Clifford rate.}
        \label{fig:rate-time}
    \end{subfigure}
    \hfill
    \begin{subfigure}{0.48\textwidth}
        \centering
        \includegraphics[width=\textwidth]{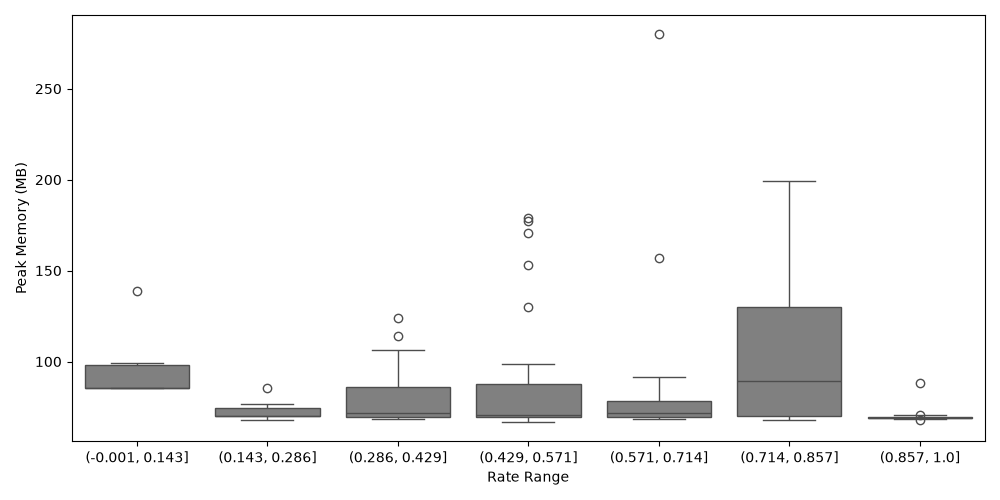}
        \caption{Distribution of the peak memory of StabQ on benchmark circuits grouped by Clifford rate.}
        \label{fig:rate-memory}
    \end{subfigure}

    \caption{Time and memory overhead of StabQ model construction across benchmark circuits. The box plots show the distributions of model building time and peak memory consumption when the benchmark circuits are grouped by qubit number, gate number, and Clifford rate. The y-axes of the time overhead are logarithmic.}
    \label{fig:3x2}
\end{figure}

\begin{tcolorbox}[size=title,rightrule=1mm, leftrule=1mm, toprule=0mm, bottomrule=0mm, arc=0pt,colback=gray!5,colframe=bleudefrance!75!black,breakable]
\textbf{Answer to RQ3:}  
StabQ demonstrates good scalability in Tableau Chain model construction across the evaluated benchmark circuits. The construction overhead mainly increases with circuit size, including the number of qubits and gates, while the peak memory consumption remains relatively stable within practical limits. The impact of Clifford rate is not directly proportional to the construction cost, suggesting that the overall symbolic structure of the circuit plays a more important role than the Clifford rate alone. These results indicate that the proposed symbolic representation and tableau consolidation strategy effectively control the growth of symbolic states and enable StabQ to construct scalable quantum program models.
\end{tcolorbox}

\section{Discussion}
\label{discussion}
The experimental results reveal several key factors influencing the overall scalability behavior of StabQ, along with its operational boundaries and potential directions for further improvement.

First, although StabQ leverages the stabilizer formalism for circuit evolution, the introduction of non-Clifford operations can still lead to an exponential growth in the number of tableaux, thereby imposing significant pressure on both construction time and memory consumption. While existing tableau consolidation techniques can partially alleviate this issue, an optimal consolidation strategy remains an open problem. Specifically, we formalize this challenge as follows:

\begin{problem}
Given a set of stabilizer groups $\{S_i\}_{i=1}^m$, where each $S_i$ stabilizes a quantum state $|x_i\rangle$, and a set of complex coefficients
$\boldsymbol{\alpha} = (\alpha_1, \alpha_2, \dots, \alpha_m)$ satisfying $\sum_{i=1}^m |\alpha_i|^2 = 1$,
we define the linear combination state
\[
|V\rangle = \sum_{i=1}^m \alpha_i |x_i\rangle.
\]
We aim to determine whether there exists a stabilizer group $S_V$ such that $|V\rangle$ is stabilized by $S_V$, i.e.,
\[
g |V\rangle = |V\rangle, \quad \forall g \in S_V,
\]
and if such a stabilizer group exists, how to construct $S_V$ from $\{S_i\}$ and $\boldsymbol{\alpha}$.
\end{problem}

This problem has been repeatedly encountered in the optimization of StabQ, and we illustrate it using two representative examples as follows:

\begin{example}
\[
\begin{aligned}
S_{1}=['+IX','+XI'],\ S_{2}=['-IX','+XI'],\ S_{3}=['+IX','-XI'],\ S_{4}=['-IX','-XI'] \\
\boldsymbol{\alpha} =[0.5, 0.5,0.5,-0.5]\ \ \ \ \ \ \  \Longrightarrow \ \ \ \ \ \ \   S_{V}=['+ZX','+XZ']
\end{aligned}
\]
\end{example}

\begin{example}
\[
\begin{aligned}
S_{1}=['+X'],\ S_{2}=['-X'],\ \boldsymbol{\alpha} =[0.5+0.5j,0.5-0.5j]\ \ \ \ \ \ \  \Longrightarrow \ \ \ \ \ \ \   S_{V}=['+Y']
\end{aligned}
\]
\end{example}

Second, we observe that the global phase recovery process may admit a more efficient formulation. The key observation is that stabilizer composition induces a finite symbolic space, where each Pauli string is restricted to ${I,X,Y,Z}$. Moreover, since non-Clifford operations are handled via Pauli decomposition, the evolution can ultimately be reduced to a finite set of Clifford operations. This suggests that a pattern-matching-based strategy may improve efficiency. However, ensuring soundness of such an approach remains challenging, and therefore it is not adopted in the current framework.

Third, we consider Case 2 of the entanglement analysis (Section~\ref{entangle}). Although this setting exhibits a tensor-product-like structure within each tableau branch, our experiments show that the current processing strategy is less efficient than that used for Case 3. This indicates that the handling of Case 2 is not yet optimal in the current framework, and improving its processing strategy remains an important direction for future work.

Overall, the above observations suggest that the scalability of StabQ is primarily governed by structural properties of quantum circuits rather than individual gate-level characteristics. In particular, the interaction between non-Clifford operations and tableau evolution plays a central role in determining computational overhead, while global structural features such as symbolic redundancy and entanglement organization further influence performance. These findings highlight that improving StabQ requires not only optimizing low-level operations such as tableau consolidation and phase recovery, but also developing more structure-aware strategies that better capture higher-level circuit characteristics.

\section{Threats to Validity}
\label{validity}
We discuss the potential threats to the validity of our study from four perspectives.

\textbf{Internal Validity.}
The correctness of StabQ depends on the implementation of several core components, including stabilizer propagation, Pauli decomposition, global-phase recovery, and tableau consolidation. Errors in these components may affect the correctness of the constructed Tableau Chain and downstream analysis results. To mitigate this threat, we validate the correctness of the constructed symbolic models by reconstructing quantum states from Tableau Chains and comparing them with the exact statevectors generated by a Qiskit statevector simulator.

\textbf{External Validity.} The generalizability of our evaluation may be affected by the characteristics and scale of the selected benchmark programs. Although our experiments include multiple benchmark suites with diverse circuit structures and Clifford rates, the evaluated circuits remain limited in scale due to the complexity of symbolic representations that involve non-Clifford operations. Moreover, StabQ is currently implemented as a research prototype and does not fully exploit all possible engineering optimizations. Therefore, the reported performance reflects the effectiveness of the proposed symbolic execution approach rather than the maximum efficiency achievable by a fully optimized implementation. Future work will investigate advanced symbolic reduction techniques and evaluate StabQ on larger-scale quantum applications. 

\textbf{Construct Validity.} We evaluate StabQ based on symbolic model correctness, downstream analysis capability, and scalability. However, these metrics may not fully capture all aspects of a symbolic execution framework. For example, the number of tableaux is used as an indicator of growth in symbolic representation, but it does not fully characterize memory consumption or computational complexity. In addition, our evaluation of downstream analysis focuses on quantum state reconstruction and entanglement analysis, while other potential quantum program analysis tasks remain unexplored. 

\textbf{Conclusion Validity.} Our conclusions are derived from the experimental results obtained on the evaluated benchmark programs. Although the correctness of reconstructed states and entanglement properties is verified against exact references, the observed performance trends may vary for different circuit structures, qubit scales, and Clifford rates. Therefore, the conclusions should be interpreted within the scope of the evaluated benchmarks.

\section{Related Work}
\label{related}
This work is closely related to three major lines of research: static analysis techniques for quantum programs, specialized semantic models for quantum program analysis, and applications of the stabilizer formalism.

\subsection{Static Analysis of Quantum Programs}
A substantial body of work applies classical static analysis techniques to quantum programs. These efforts have primarily focused on two fundamental analysis objectives: support analysis, which determines the qubits involved in quantum computation, and entanglement analysis, which infers entanglement relationships among qubits.
Yu and Palsberg~\cite{yu2021quantum} proposed a quantum abstract interpretation based on projection abstractions, representing quantum states as tuples of low-dimensional projections rather than full density matrices. This abstraction enables efficient state propagation while preserving only support information for subsequent analysis.
Perdrix~\cite{perdrix2008entanglement} pioneered an abstract interpretation that over-approximates entanglement relations among qubits as a partition of the quantum system. Honda~\cite{honda2015analysis} refined this direction by tracking abstract stabilizer information to approximate entanglement more precisely. 
Xia and Zhao~\cite{static} proposed a static analysis approach that constructs an entanglement connectivity graph directly from quantum program code, enabling the inference of entanglement relationships among qubits throughout quantum state evolution.
Type-system-based approaches such as Twist~\cite{yuan2022twist} combine static reasoning with runtime checks to enforce purity of quantum data, while linters and bug detectors such as QChecker~\cite{zhao2023qchecker} and LintQ~\cite{paltenghi2024lintq} detect common bug patterns in real-world quantum programs through abstraction- and pattern-based static checks.

These static analysis approaches provide efficient analysis with formal soundness guarantees in many cases. However, due to the use of abstraction, they may sacrifice precision and introduce false positives, limiting their applicability to fine-grained quantum program analysis. Moreover, existing approaches are typically designed for specific analysis objectives and rely on task-specific semantic models. Developing a unified representation that supports diverse quantum program analysis tasks remains challenging, limiting its generality and extensibility.

\subsection{Specialized Models for Quantum Program Analysis}
Another line of work builds dedicated semantic models for particular analysis tasks. 
Decision-diagram-based representations~\cite{niemann2016qmdds,BDD,feynmanDD,burgholzer2021advanced} have been widely explored for quantum circuit simulation and equivalence checking due to their ability to compactly encode structured quantum states and operations. 
The ZX-calculus~\cite{ZX1,coecke2011interacting,kissinger2020pyzx} provides a graphical rewriting framework that represents quantum circuits as algebraic graphs and applies sound rewrite rules for circuit optimization and equivalence checking. 
Tensor-network-based approaches~\cite{markov2008simulating} exploit structural properties of quantum circuits, such as limited connectivity and small treewidth, to enable efficient classical simulation through tensor contraction.
Quantum Markov chain-based approaches~\cite{feng2015qpmc,ying2021model,Qreach}model quantum programs as state-transition systems and enable reachability analysis and model checking of quantum behaviors.
Tree-automata-based approaches~\cite{tree1, tree2} provide symbolic representations of sets of quantum states and support automated verification of quantum circuits through automata-based reasoning.

Although these semantic models provide powerful abstractions for quantum program analysis, many of them are adapted from general-purpose computational models or mathematical frameworks originally developed outside quantum computing. Consequently, applying these models to quantum programs often requires additional abstractions, constraints, or domain-specific adaptations to capture quantum-specific properties. In contrast, the stabilizer formalism originates directly from quantum information theory and exploits the algebraic structure of quantum states and operations, providing inherent advantages for analyzing quantum-specific properties.

\subsection{Stabilizer Formalism and Its Applications}
The stabilizer formalism~\cite{gottesman1997}, originating in quantum error-correction theory, has become a fundamental framework for constructing and analyzing quantum error-correcting codes and fault-tolerant quantum computation~\cite{steane,shor,surface}. Beyond its original applications, the stabilizer formalism has also emerged as an efficient representation for quantum states and operations, particularly in the context of Clifford circuits. Gottesman and Knill's theorem demonstrates that Clifford circuits can be efficiently simulated classically by tracking stabilizer generators rather than explicitly manipulating exponentially large state vectors. Based on this principle, various stabilizer-based simulators, such as the Aaronson-Gottesman simulator~\cite{aaronson2004improved} and Stim~\cite{gidney2021stim}, have been developed to achieve scalable simulation of large-scale Clifford circuits.

Due to its compact representation and algebraic structure, the stabilizer formalism has also been explored for quantum program analysis and verification tasks. In particular, stabilizer-based approaches have been extensively studied for quantum entanglement analysis, where the structure of stabilizer generators is leveraged to characterize entanglement relationships among qubits~\cite{fattal,StabE1,StabE2,StabE3,honda2015analysis}. Beyond quantum state analysis, Fang and Ying~\cite{stabqec} integrated stabilizer representations into existing quantum program symbolic execution techniques~\cite{bauer2023symqv} and applied their approach to quantum error-correction programs, thereby significantly improving the efficiency of formal verification. Tan et al.~\cite{hornbro} proposed HornBro, a quantum program repair framework that combines stabilizer-based representations with the Z3 solver for fault localization and automated repair. These studies demonstrate that stabilizer representations provide compact and structured abstractions for analyzing quantum programs.

However, existing stabilizer-based applications remain fundamentally constrained by the Clifford restriction. Non-Clifford operations violate the closure property of stabilizer representations, preventing conventional stabilizer frameworks from directly preserving quantum state evolution in general quantum programs. Therefore, extending stabilizer-based representations to support non-Clifford operations while preserving their efficiency remains an important challenge.

Several studies~\cite{Bravyi2016,Bravyi2019} have attempted to address this limitation through stabilizer decomposition techniques. These approaches introduce auxiliary magic states and represent non-stabilizer states as linear combinations of stabilizer states, thereby enabling the simulation of quantum circuits containing both Clifford and non-Clifford operations, such as Clifford+T circuits. By decomposing non-stabilizer components into stabilizer-state representations, they effectively extend stabilizer-based simulation beyond the Clifford-only setting.

Nevertheless, existing stabilizer decomposition approaches primarily target efficient quantum simulation rather than symbolic analysis of quantum programs. To achieve scalability, many approaches reduce simulation overhead through approximation techniques, such as truncating low-weight stabilizer components, which may introduce errors into intermediate state representations. As a result, these approaches are not directly suitable for quantum program analysis tasks that require faithful preservation of execution semantics and precise characterization of intermediate quantum states. Moreover, stabilizer decomposition methods typically focus on representing non-stabilizer states during simulation, rather than maintaining a symbolic representation of the entire program execution process.

Different from these approaches, StabQ focuses on constructing an exact symbolic representation of quantum program evolution based on stabilizer representations. Instead of approximating non-stabilizer states or decomposing states after they are generated, StabQ transforms non-Clifford operations during symbolic propagation and explicitly maintains their effects throughout program execution. This design enables stabilizer-based symbolic execution of general quantum programs that contain both Clifford and non-Clifford operations while preserving the semantic information required for downstream quantum program analysis.

\section{Conclusion}
\label{conclusion}
In this paper, we presented \textbf{StabQ}, a symbolic execution framework for quantum program analysis based on stabilizer representations. StabQ extends the applicability of stabilizer-based representations beyond Clifford-only programs by enabling symbolic state propagation in quantum programs that contain non-Clifford operations. Through a Pauli-decomposition mechanism, StabQ transforms non-Clifford operations into weighted Pauli components and integrates them into a unified symbolic execution process based on weighted stabilizer tableaux.
Based on this representation, StabQ constructs a \textit{Tableau Chain} that captures the evolution of quantum program states throughout execution. Each Tableau Chain node records an intermediate symbolic state, allowing multiple analysis tasks to be performed over different execution stages without repeatedly reconstructing quantum states. Building upon the Tableau Chain, StabQ supports various quantum program analysis tasks, including quantum state reconstruction, entanglement analysis, support-set computation, and Clifford-property detection.
To improve the scalability and maintainability of symbolic representations, StabQ incorporates tableau consolidation and global-phase recovery mechanisms. These techniques reduce redundant symbolic states and preserve the semantic information required for accurate downstream analysis.
Our experimental evaluation demonstrates that StabQ constructs semantically consistent symbolic execution models and faithfully preserves quantum program behavior. Across diverse benchmark suites and circuit characteristics, the reconstructed quantum states and entanglement properties are consistent with exact reference results, demonstrating the correctness of the proposed representation. Furthermore, the results show that the Tableau Chain provides a reusable intermediate representation that effectively supports downstream quantum program analysis.
Our study also identifies several limitations of the current framework. In particular, symbolic representation growth remains challenging for programs with intensive non-Clifford operations, global-phase recovery introduces additional computational overhead, and certain entanglement structures require more specialized analysis strategies. These observations motivate future research on structure-aware symbolic representations, more efficient tableau consolidation techniques, and advanced optimization strategies for large-scale quantum programs.

Overall, StabQ provides a practical foundation for symbolic execution of quantum programs by extending stabilizer representations to support general quantum computations. By combining the compactness of stabilizer representations with explicit execution-state tracking, StabQ bridges the gap between efficient quantum state representation and systematic quantum program analysis in the presence of both Clifford and non-Clifford operations.

\section{Data Availability}
Our framework is publicly available at \url{https://github.com/Xzore19/StabQ}.

\begin{acks}
This work was supported by JST SPRING Grant No.\ JPMJSP2136, JST Moonshot R\&D Grant No.\ JPMJMS256E and JSPS KAKENHI Grants JP24K14908, JP26K02892, JP26K23820.
\end{acks}

\bibliographystyle{ACM-Reference-Format}
\bibliography{paper/ref}










\end{document}